\newcommand{\R}{\mathbb{R}}
\newcommand{\Q}{\mathbb{Q}}
\newcommand{\N}{\mathbb{N}}
\newcommand{\Linf}[2]{L_{#2}^\infty(#1)}
\newcommand{\SLinf}[2]{\mathscr{L}_{#2}^\infty(#1)}
\newcommand{\CB}{\mathcal{B}}
\newcommand{\CF}{\mathcal{F}}
\newcommand{\CG}{\mathcal{G}}
\newcommand{\CH}{\mathcal{H}}
\newcommand{\CP}{\mathcal{P}}
\newcommand{\CX}{\mathcal{X}}
\newcommand{\PW}{\mathds{P}}
\newcommand{\PWQ}{\mathds{Q}}
\newcommand{\WR}{\Omega}
\newcommand{\om}{\omega}
\newcommand{\dLG}{\widetilde{\Lambda}_{\CG}}
\newcommand{\dL}{\widetilde{\Lambda}}
\newcommand{\cL}{\Lambda}
\newcommand{\rG}{\rho_{\CG}}
\newcommand{\rGn}{\eta_{\CG}}
\newcommand{\crg}[2]{\rG\left(#1,#2\right)}
\newcommand{\drG}{\widetilde{\rho}_\CG}
\newcommand{\drg}[2]{\drG\left(#1,#2\right)}
\newcommand{\rgn}[2]{\rGn\left(#1,#2\right)}
\newcommand{\dLg}[2]{\dLG\left(#1,#2\right)}
\newcommand{\eqdLg}[2]{\eq{\dL}_\CG\left(#1,#2\right)}
\newcommand{\cLG}{\Lambda_\CG}
\newcommand{\cLg}[1]{\cLG\left(#1\right)}
\newcommand{\vecone}{\mathbf{1}_d}
\newcommand{\veczero}{\mathbf{0}_d}
\newcommand{\BPi}{\mathbf{\Pi}}
\newcommand{\HLG}{\widehat{\Lambda}_\CG}
\newcommand{\HLg}[2]{\HLG\left(#1,#2\right)}
\newcommand{\HX}{\widehat{X}}
\newcommand{\HY}{\widehat{Y}}
\newcommand{\xo}{\overline{x}}
\newcommand{\ind}{\mathbbmss{1}}
\newcommand{\Xh}[1]{X^{(#1)}}
\newcommand{\EW}[2]{\mathds{E}_{#1}\left[#2\right]}
\newcommand{\BEW}[3]{\mathds{E}_{#1}\left[\left.#2\,\right|\,#3\right]}
\newcommand{\eq}[1]{#1}
\newcommand{\eqind}{\eq{\ind}}
\newcommand{\eqalpha}{\eq{\alpha}}
\newcommand{\eqlambda}{\eq{\lambda}}
\newcommand{\eqF}{\eq{F}}
\newcommand{\eqFom}{F(\om)}
\newcommand{\eqX}{\eq{X}}
\newcommand{\eqXom}{X(\om)}
\newcommand{\eqXh}[1]{\eq{X}^{(#1)}}
\newcommand{\eqXomh}[1]{X^{(#1)}(\om)}
\newcommand{\eqx}{\eq{x}}
\newcommand{\eqHX}{\eq{\HX}}
\newcommand{\eqG}{\eq{G}}
\newcommand{\eqGom}{G(\om)}
\newcommand{\eqY}{\eq{Y}}
\newcommand{\eqYom}{Y(\om)}
\newcommand{\eqy}{\eq{y}}
\newcommand{\eqHY}{\eq{\HY}}
\newcommand{\eqH}{\eq{H}}
\newcommand{\eqHom}{H(\om)}
\newcommand{\eqZ}{\eq{Z}}
\newcommand{\eqZom}{Z(\om)}
\newcommand{\eqZh}[1]{\eq{Z}^{(#1)}}
\newcommand{\eqZhom}[1]{Z^{(#1)}(\om)}
\newcommand{\eqone}{\eq{1}}
\newcommand{\eqvecone}{\eq{\mathbf{1}}_d}
\DeclareMathOperator{\AVaR}{AVaR}
\DeclareMathOperator*{\essinf}{essinf}
\DeclareMathOperator*{\esssup}{esssup}
\DeclareMathOperator{\Imag}{Im}
\DeclareMathOperator{\VaR}{VaR}
\newcommand{\supnorm}[1]{\left\|#1\right\|_{\infty}}
\numberwithin{figure}{section}
\numberwithin{table}{section}
\numberwithin{equation}{section}
\newtheorem{Theorem}{Theorem}[section]
\newtheorem{Proposition}[Theorem]{Proposition}
\newtheorem{Lemma}[Theorem]{Lemma}
\newtheorem{Definition}[Theorem]{Definition}
\theoremstyle{definition}
\newtheorem{Example}[Theorem]{Example}
\theoremstyle{remark}
\newtheorem{Remark}[Theorem]{Remark}
\begin{document}
\selectlanguage{english}
\title{Risk-Consistent Conditional Systemic Risk Measures}
\author{Hannes Hoffmann\thanks{Department of Mathematics, University of Munich, Theresienstra{\ss}e 39, 80333 Munich, Germany. Emails: hannes.hoffmann@math.lmu.de, meyer-brandis@math.lmu.de and gregor.svindland@math.lmu.de.}\and
Thilo Meyer-Brandis\footnotemark[1]\and
Gregor Svindland\footnotemark[1]}
\maketitle
\begin{abstract}
We axiomatically introduce {\em risk-consistent conditional systemic risk measures} defined on multidimensional risks. This class consists of those conditional systemic risk measures which can be decomposed into a state-wise conditional aggregation and a univariate conditional risk measure. Our studies extend known results for unconditional risk measures on finite state spaces. We argue in favor of a conditional framework on general probability spaces for assessing systemic risk.
Mathematically, the problem reduces to selecting a realization of a random field with suitable properties. Moreover, our approach covers many prominent examples of systemic risk measures from the literature and used in practice. 
\\[4mm]\noindent\textbf{Keywords:} conditional systemic risk measure, conditional aggregation, risk-consistent properties, conditional value at risk, conditional expected short fall.  
\end{abstract}

\section{Introduction}
The recent financial crisis revealed weaknesses in the financial regulatory framework when it comes to the protection against systemic events. Before, it was generally accepted to measure the risk of financial institutions on a stand alone basis. In the aftermath of the financial crisis risk assessment of financial systems as well as their impact on the real economy has become increasingly important, as is documented by a rapidly growing literature; see e.g.\ \cite{Amini2013b} or \cite{Bisias2012} for a survey and the references therein. Parts of this literature are concerned with designing appropriate risk measures for financial systems, so-called systemic risk measures.
The aim of this paper is to axiomatically characterize the class of systemic risk measures $\rho$ which admit a decomposition of the following form:
\begin{equation}\label{eq:aim1}
\rho(\eqX)=\eta\left(\cL(\eqX)\right),
\end{equation}
where $\cL$ is a state-wise aggregation function over the $d$-dimensional random risk factors $\eqX$ of the financial system, e.g.\ profits and losses at a given future time horizon, and $\eta$ is a univariate risk measure.
The aggregation function determines how much a single risk factor contributes to the total risk $\cL(\eqX)$ of the financial system in every single state, whereas the so-called base risk measure $\eta$ quantifies the risk of $\cL(\eqX)$. \cite{Chen2013} first introduced axioms for systemic risk measures, and showed that these admit a decomposition of type \eqref{eq:aim1}. Their studies relied on a finite state space and were carried out in an unconditional framework. \cite{Kromer2013} extend this to arbitrary probability spaces, but keep the unconditional setting. The main contributions of this paper are: 
\begin{itemize}
\item[1.] We axiomatically characterize systemic risk measures of type \eqref{eq:aim1} in a conditional framework, in particular we consider conditional aggregation functions and conditional base risk measures in \eqref{eq:aim1}.
\item[2.] We allow for a very general structure of the aggregation, which is flexible enough to cover examples from the literature which could not be handled in axiomatic approaches to systemic risk so far. 
\item[3.] We work in a less restrictive axiomatic setting, which gives us the flexibility to study systemic risk measures which for instance need not necessarily be convex or quasi-convex, etc. This again provides enough flexibility to cover a vast amount of systemic risk measures applied in practice or proposed in the literature. It also allows us to identify the relation between properties of $\rho$ and properties of $\cL$ and $\eta$, and in particular the mechanisms behind the transfer of properties from $\rho$ to $\cL$ and $\eta$, and vice versa. This is related to the following point 4.    
\item[4.] We identify the underlying structure of the decomposition \eqref{eq:aim1} by defining systemic risk measures solely in terms of so called risk-consistent properties and properties on constants.
\end{itemize}
In the following we will elaborate on the points 1.--4.\ above.

\paragraph{1. A conditional framework for assessing systemic risk}~

\smallskip\noindent
We consider systemic risk in a conditional framework.
The standard motivation for considering conditional univariate risk measures (see e.g.\ \cite{Detlefsen2005} and \cite{Acciaio2011}) is the conditioning in time, and the argumentation in favor of this also carries over to multivariate risk measures. 
However, apart from a dynamic assessment of the risk of a financial system, it might be particularly interesting to consider conditioning in space. In that respect \cite{Follmer2014a} recently introduced and studied so-called spatial risk measures for univariate risks.
Typical examples of spatial conditioning are conditioning on events representing the whole financial system or parts of that system, such as single financial institutions, in distress. This is done to study the impact of such a distress on (parts of) the financial system or the real economy, and thereby to identify systemically relevant structures. For instance the Conditional Value at Risk (CoVaR) introduced in \cite{Adrian2011} considers for $q\in (0,1)$ the $q$-quantile of the distribution of the netted profits/losses of a financial system $\eqX=(\eqX_1,\ldots, \eqX_d)$ conditional on a crisis event $C(X_i)$ of institution $i$:
\begin{equation}\label{eq:covar:intro}\PW\left(\left. \sum_{i=1}^d X_i\leq -\text{CoVaR}_q(\eqX) \right|C(X_i)\right)=q; \end{equation} see \thref{ex:COVAR}.
More examples can be found in \cite{Cont2013}, \cite{Engle2014}, \cite{Acharya2010}. Such risk measures fit naturally in a conditional framework; cf.\ \thref{ex:COVAR} and \thref{ex:DIP}. 



\paragraph{2. Aggregation of multidimensional risk}~

\smallskip\noindent
A quite common aggregation rule for a multivariate risk $\eqX=(\eqX_1,\ldots, \eqX_d)$ is simply the sum 
$$\Lambda_{\text{sum}}(X)=\sum_{i=1}^d X_i; $$ see the definition of CoVaR in \eqref{eq:covar:intro}.
$\Lambda_{\text{sum}}(X)$ represents the total profit/loss after the netting of all single profits/losses. However, such an aggregation rule might not always be reasonable when measuring systemic risk. 
The major drawbacks of this aggregation function in the context of financial systems are that profits can be transferred from one institution to another and that losses of a financial institution cannot trigger additional contagion effects.  
%
%
Those deficiencies are overcome by aggregation functions which explicitly account for contagion effects within a financial system. For instance, based on the approach in \cite{Eisenberg2001}, the authors in \cite{Chen2013} introduce such an aggregation rule which however, due to the more restrictive axiomatic setting, exhibits the unrealistic feature that in case of a clearing of the system institutions might decrease their liabilities by more than their total debt level. We will present a more realistic extension of this contagion model together with a small simulation study in \thref{EX:CM}. 

Moreover, we present reasonable aggregation functions which are not comprised by the axiomatic framework of \cite{Chen2013} or \cite{Kromer2013}. In particular this includes \emph{conditional aggregation functions} which come naturally into play in our framework; see \thref{EX:Stochasticdiscount}.


\paragraph{3.--4. Axioms for systemic risk measures}~

\smallskip\noindent
Our aim is to identify the relation between properties of $\rho$ and properties of $\cL$ and $\eta$ in \eqref{eq:aim1} respectively, and in particular the mechanisms behind the transfer of properties from $\rho$ to $\cL$ and $\eta$, and vice versa. We will show that this leads to two different classes of axioms for conditional systemic risk measures. One class concerns the behavior on deterministic risks, so-called properties on constants. The other class of axioms ensures a consistency between state-wise and global - in the sense of over all states - risk assessment. This latter class will be called risk-consistent properties. 
%

The risk-consistent properties ensure a consistency between local - that is $\omega$-wise - risk assessment and the measured global risk. For example, \emph{risk-antitonicity} is expressed by: if for given risk vectors $\eqX$ and $\eqY$ it holds that $\rho(X(\omega)) \geq \rho(Y(\omega))$  in almost all states $\om$, then $\rho(\eqX) \geq \rho(\eqY)$. The naming \emph{risk-antitonicity}, and analogously the naming for the other risk-consistent properties, is motivated by the fact that antitonicity is considered with respect to the order relation $\rho(X(\om)) \geq \rho(Y(\om))$ induced by the $\omega$-wise risk comparison of two positions and not with respect to the usual order relation on the space of random vectors.

Note that for a univariate risk measure $\rho$ which is constant on constants, i.e.\ $\rho(\eq{x})=-x$ for all $x\in\R$, risk-antitonicity is equivalent to the 'classical' antitonicity with respect to the usual order relation on the underlying space of random variables. In a general multivariate setting this equivalence does not hold anymore. However, we will show that properties on constants in conjunction with corresponding risk-consistent properties imply the classical properties on the space of risks. This makes our risk model very flexible, since we may identify systemic risk measures where for example the corresponding aggregation function $\cL$ in \eqref{eq:aim1} is concave, but the base risk measure $\eta$ is not convex. Moreover, it will turn out that the properties on constants basically determine the underlying aggregation rule in the systemic risk assessment, whereas the risk-consistent properties translate to properties of the base risk measure in the decomposition \eqref{eq:aim1}.  

Some of the risk-consistent properties, however partly under different names, also appear in the frameworks of \cite{Chen2013} and \cite{Kromer2013}. For instance what we will call risk-antitonicity is called preference consistency in \cite{Chen2013}. In our framework we emphasize the link between the risk-consistent properties (and the properties on constants) and the decomposition \eqref{eq:aim1}. This aspect has not been clearly worked out so far. It leads us to introducing a number of new axioms and to classifying all axioms within the mentioned classes of risk-consistent properties and properties on constants. 

\paragraph{Structure of the paper}~

\smallskip\noindent
In Section \ref{SEC:DECOMP} we introduce our notation and the main objects of this paper, that is the risk-consistent conditional systemic risk measures, the conditional aggregation functions and the conditional base risk measures as well as their various extensions. At the end of Section \ref{SEC:DECOMP} we state our main decomposition result (\thref{T1}) for risk-consistent conditional systemic risk measures. Moreover, \thref{T2} reveals the connection between risk-consistent properties and properties on constants on the one hand and the classical properties of risk measures on the other hand. Section~\ref{SEC:PROOF} is devoted to the proofs of \thref{T1} and \thref{T2}. In Section~\ref{SEC:EX} we collect our examples.


\section{Decomposition of systemic risk measures}\label{SEC:DECOMP}
Throughout this paper let $(\WR,\CF,\PW)$ be a probability space and $\CG$ be a sub-$\sigma$-algebra of $\CF$. 
$\SLinf{\CF}{}:=\mathscr{L}^\infty(\WR,\CF,\PW)$ refers to the space of $\CF$-measurable, $\PW$-almost surely (a.s.) bounded random variables and $\SLinf{\CF}{d}$ to the $d$-fold cartesian product of $\SLinf{\CF}{}$.
As usual, $L^\infty(\CF)$ and $L^\infty_d(\CF)$ denote the corresponding spaces of random variables/vectors modulo $\PW$-a.s.\ equality.
For $\CG$-measurable random variables/vectors analogue notations are used.

\noindent In general, upper case letters will represent random variables, where $X,Y,Z$ are multidimensional and $F,G,H$ are one-dimensional, and lower case letters deterministic values.

\noindent We will use the usual componentwise orderings on $\R^d$ and $\Linf{\CF}{d}$, i.e.\ $x=(x_1,\ldots,x_d)\geq y=(y_1,\ldots,y_d)$ for $x,y\in \R^d$ if and only if $x_i\geq y_i$ for all $i=1,\ldots, d$, and similarly $\eqX\geq\eqY$ if and only if $X_i\geq Y_i$ a.s.\ for all $i=1,...,d$.
Furthermore $\vecone$ and $\veczero$ denote the $d$-dimensional vectors whose entries are all equal to 1 or all equal to 0, respectively. 

\noindent When deriving our main results we will run into similar problems as one faces in the study of stochastic processes: At some point it will not be sufficient to work on equivalence classes, but we will need a specific nice realization or version of the process, for instance a version with continuous paths, etc.   
In the following, by a realization of a function $\rG:L^\infty_d(\CF)\to L^\infty(\CG)$ we mean a selection of one representative in the equivalence class $\rG(\eqX)$ for each $\eqX\in L^\infty_d(\CF)$, i.e.\ a function $\rG(\cdot,\cdot):\Linf{\CF}{d}\times\WR\to\R$ where $\rG(\eqX,\cdot)\in\SLinf{\CG}{}$ with $\rG(\eqX,\cdot)\in \rG(\eqX)$ for all $\eqX\in\Linf{\CF}{d}$. We emphasize that in the following we will always denote a realization of a function $\rG$ by its explicit dependence on the two arguments: $\rG(\cdot,\cdot)$.
Indeed, our decomposition result in \thref{T1} will be based on the idea to break down a random variable into every single scenario and evaluating it separately. This implies working with appropriate realizations which will satisfy properties which we will denote \textit{risk-consistent} properties. 

Also for risk factors we will work both with equivalence classes of random vectors in $\Linf{\CF}{d}$  and their corresponding representatives in $\SLinf{\CF}{d}$.
However, in contrast to the realizations of $\rG$ introduced above, here the considerations do not depend on the specific choice of the representative. 
Hence for risk factors $X\in\Linf{\CF}{d}$ we will stick to usual abuse of notation of also writing $X$ for an arbitrary representative in $\SLinf{\CF}{d}$ of the corresponding equivalence class. This will become clear from the context. In particular, $X(\om)$ denotes an arbitrary representative of the corresponding equivalence class evaluated in the state $\om\in\WR$.

Finally, we write $x\in\R^d$ both for real numbers and for (equivalence classes of) constant random variables depending on the context.

The following definition introduces our main object of interest in this paper:

\begin{Definition}[Risk-consistent Conditional Systemic Risk Measure]\th\label{DEFCSRM}~\\
A function $\rG: \Linf{\CF}{d}\to \Linf{\CG}{}$ is called a \emph{risk-consistent conditional systemic risk measure} (CSRM), if it is 
\begin{description}
\item[\emph{Antitone on constants:}] For all $x,y\in\R^d$ with $x\geq y$ we have $\rG(\eqx)\leq \rG(\eqy)\,,$
\end{description}
and if there exists a realization $\crg{\cdot}{\cdot}$ such that the restriction 
	\begin{equation}\label{eq:rev1}\drG:\R^d\times\WR\to\R;~ x\mapsto\crg{\eqx}{\om}\end{equation}
	 has \emph{continuous paths}, i.e.\ $\drG$ is continuous in its first argument a.s., and it satisfies 
\begin{description}
\item[\emph{Risk-antitonicity:}] For all $\eqX,\eqY\in\Linf{\CF}{d}$ with 
$\drg{\eqXom}{\om}\geq\drg{\eqYom}{\om}$ a.s.\ we have $\rG(\eqX)\geq\rG(\eqY)$. 
\end{description}
\smallskip\noindent Furthermore, we will consider the following properties of $\rG$ on constants: 
\begin{description}
\item[\emph{Convexity on constants:}] $\rG\left(\eqlambda \eqx+(\eqone-\eqlambda)\eqy\right)\leq \eqlambda \rG(\eqx)+(\eqone-\eqlambda)\rG(\eqy)$ for all constants $x,y\in \R^d$ and $\lambda\in[0,1]$;
\item[\emph{Positive homogeneity on constants:}] $\rG(\eqlambda \eqx)=\eqlambda\rG(\eqx)$ for all $x\in\R^d$ and $\lambda\geq0$.
\end{description}
We will also consider the following risk-consistent properties of $\rG$:
\begin{description}
\item[\emph{Risk-convexity:}] If for $\eqX,\eqY,\eqZ\in\Linf{\CF}{d}$ there exists an $\eqalpha\in\Linf{\CG}{}$ with $0\leq\alpha\leq 1$ such that
$\drg{\eqZom}{\om}=\alpha(\om)\drg{\eqXom}{\om}+\big(1-\alpha(\om)\big)\drg{\eqYom}{\om}$ a.s.,
then  $\rG(\eqZ)\leq\eqalpha\rG(\eqX)+(\eqone-\eqalpha)\rG(\eqY)$;
\item[\emph{Risk-quasiconvexity:}] If for $\eqX,\eqY,\eqZ\in\Linf{\CF}{d}$ there exists an $\eqalpha\in\Linf{\CG}{}$ with $0\leq\alpha\leq 1$ such that 
$\drg{\eqZom}{\om}=\alpha(\om)\drg{\eqXom}{\om}+\big(1-\alpha(\om)\big)\drg{\eqYom}{\om}$ a.s.,
then  $ \rG(\eqZ)\leq\rG(\eqX)\vee\rG(\eqY)$;
\item[\emph{Risk-positive homogeneity:}] If for $\eqX,\eqY\in\Linf{\CF}{d}$ there exists an $\eqalpha\in\Linf{\CG}{}$ with $\alpha\geq 0$ such that
$\drg{\eqYom}{\om}=\alpha(\om)\drg{\eqXom}{\om}$  a.s.,
then  $ \rG(\eqY)=\eqalpha\rG(\eqX)$;
\item[\emph{Risk-regularity:}]  $\crg{\eqX}{\om}=\drg{\eqXom}{\om}$ a.s.\ for all $\eqX\in\Linf{\CG}{d}$.
\end{description}
\end{Definition}
We will see in \thref{T1} that risk-antitonicity is the crucial property which guarantees that $\rG$ allows a conditional decomposition analogously to \eqref{eq:aim1}.
The idea behind all risk-consistent properties is that they ensure a consistency between local - that is $\omega$-wise - risk assessment and the measured global risk. 
Consider for instance again the risk-antitonicity property and suppose we are given an event $A\in\CG$ and random risk factors $\eqZ\in \Linf{\CF}{d}$ as well as $\eqX,\eqY\in \Linf{\CF}{d}$ such that on the level of our realization which satisfies the risk-antitonicity we have $\drg{\eqXom}{\om}\geq \drg{\eqYom}{\om}$ a.s.\ on $A$. In other words for almost all $\om\in A$, the risk of the constant risk factors $\eqXom$ evaluated in $\om$ is higher than the corresponding risk of $\eqYom$ evaluated in $\om$. Now consider the modified risk factors $\eqZ_X:=\eqX\eqind_A+\eqZ\eqind_{A^C}$ and $\eqZ_Y:=\eqY\eqind_A+\eqZ\eqind_{A^C}$ where we modify $\eqZ$ on $A$ in such a way that $\eqZ_Y$ is preferred on almost every state in $A$ to $\eqZ_X$, and 
otherwise both risk factors are identical. Then risk-antitonicity implies that $\rG(\eqZ_Y)\leq \rG(\eqZ_X)$. 

Our definition of a CSRM is based on properties on constants together with risk-consistent properties. It turns out (see \thref{T1}) that the properties on constants translate into the corresponding properties of the (conditional) aggregation function and the risk-consistent properties translate into the corresponding properties of the (conditional) base risk measure in the decomposition of a CSRM. Moreover, a natural question is to which extend CSRM's also fulfill the established properties of risk measures in the literature. For instance, antitonicity on $L^\infty_d$, i.e.\ $\eqX\geq \eqY$ implies $\rG(\eqX)\leq \rG(\eqY)$, is commonly accepted as a minimal requirement for risk measures. Further, quasiconvexity or the stronger condition of convexity on $L^\infty_d$ are properties often asked for as they correspond to the requirement that diversification should not be penalized, cf.\ \cite{Cerreia-Vioglio2011}. Also, an important subclass are those CSRM which are positive homogeneous, as for example the 
CoVaR or the 
CoES introduced in \cite{Adrian2011}; see \thref{ex:COVAR} and \thref{ex:COES}. In general, it will turn out (see \thref{T2}) that properties on constants combined with the corresponding risk-consistent properties will imply properties such as antitonicity, (quasi-) convexity or positive homogeneity of $\rG$ on $L^\infty_d$. For example, antitonicity on constants in conjunction with risk-antitonicity implies antitonicity on $L^\infty_d$. 

One might ask in which setting it is possible to formulate the risk-consistent properties directly in terms of the function $\rG$ without requiring the existence of a particular realization of this function. As we will see in the next \thref{P1} this is possible if $\rho_\CG(\eqx)$ has a discrete structure for all $x\in\R^d$. For the sake of brevity we omit the proof. 
\begin{Proposition}\th\label{P1}
	Let $\rG:\Linf{\CF}{d}\to \Linf{\CG}{}$ be a function which has a realization with continuous paths. Further suppose that  
	\begin{equation}\label{gregor:eq:2}\rG(\eqx)=\sum_{i=1}^s \eq{a_i(x)\ind_{A_i}},\; x\in\R^d,\end{equation}
	where $a_i(x)\in\R$ and $A_i\in \CG$ are pairwise disjoint sets such that $\WR=\bigcup_{i=1}^s A_i$ for $s\in \N\cup\{\infty\}$.
	Define $k:\WR\to\N;\; \om\mapsto i \text{ such that }\om\in A_i.$
	Then $\rG$ is risk-antitone if and only if 
	\begin{equation}\label{hannes:discstrucra}\rG(\eqXom)\ind_{A_{k(\om)}}\geq \rG(\eqYom)\ind_{A_{k(\om)}}\; \text{a.s.\ implies }\rG(X)\geq\rG(Y),\end{equation} 
	where here the point evaluations $\eqXom,\eqYom\in\R^d$ have to be understood as equivalence classes of constant random variables.
	Also the remaining risk-consistent properties can be expressed in a similar way without requiring a particular realization of $\rG$.
\end{Proposition}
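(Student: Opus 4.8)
The plan is to reduce the asserted equivalence to a single statement comparing the hypotheses of the two implications. Note that risk-antitonicity (as in \thref{DEFCSRM}) and the condition \eqref{hannes:discstrucra} share the \emph{same} conclusion $\rG(X)\geq\rG(Y)$ and differ only in their hypotheses. Thus it suffices to prove that, for any realization $\crg{\cdot}{\cdot}$ with continuous paths, the two conditions
$$\drg{\eqXom}{\om}\geq\drg{\eqYom}{\om}\ \text{a.s.}\qquad\text{and}\qquad\rG(\eqXom)\ind_{A_{k(\om)}}\geq\rG(\eqYom)\ind_{A_{k(\om)}}\ \text{a.s.}$$
are equivalent; given this, each of the two implications is obtained from the other by merely rewriting the hypothesis. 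A useful by-product is that the diagonal values $\drg{\eqXom}{\om}$ turn out to be determined a.s.\ by the coefficients $a_i$ alone, hence are independent of the particular continuous-paths realization, so that the definition of risk-antitonicity is insensitive to this choice.

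The heart of the matter is the identity $\drg{\eqXom}{\om}=a_{k(\om)}(\eqXom)$ a.s., for every $X\in\Linf{\CF}{d}$. I would fix a countable dense set $D\subseteq\R^d$ and use that, for each fixed $x\in D$, the path $\drg{x}{\cdot}=\crg{\eqx}{\cdot}$ is a representative of $\rG(\eqx)=\sum_i a_i(x)\ind_{A_i}$, so that $\drg{x}{\om}=a_{k(\om)}(x)$ off a null set $N_x$. Discarding $\bigcup_{x\in D}N_x$ and the null set where the paths are not continuous leaves a set of full measure on which $\drg{x}{\om}=a_{k(\om)}(x)$ for all $x\in D$ simultaneously. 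Continuity of $x\mapsto\drg{x}{\om}$ then forces two paths agreeing on $D$ to agree on all of $\R^d$; consequently the path depends (a.s.) only on $k(\om)$, is constant on each $A_i$ of positive measure, and this common continuous path must coincide with $a_i$ everywhere (since $a_i(x)$ equals the path value a.s.\ on $A_i$ for each fixed $x$). In particular each such $a_i$ is continuous and $\drg{x}{\om}=a_{k(\om)}(x)$ holds for \emph{all} $x\in\R^d$ on the good set; evaluating along the diagonal $x=\eqXom$, which is legitimate because $X$ is bounded and the path is continuous, yields the identity. Only countably many exceptional null sets enter, so the argument is unaffected if $s=\infty$.

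Finally I would unwind the localized inequality. For fixed $\om$, multiplying by $\ind_{A_{k(\om)}}$ collapses the $\Linf{\CG}{}$-inequality $\rG(\eqXom)\ind_{A_{k(\om)}}\geq\rG(\eqYom)\ind_{A_{k(\om)}}$ to the scalar inequality $a_{k(\om)}(\eqXom)\geq a_{k(\om)}(\eqYom)$, since both sides vanish off $A_{k(\om)}$ and on $A_{k(\om)}$ reduce to the single value indexed by $k(\om)$; the states $\om$ with $\PW(A_{k(\om)})=0$ form a countable union of null sets and can be ignored. Combining this with the identity of the previous paragraph shows that the two displayed hypotheses coincide a.s., which finishes the proof. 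I expect the main obstacle to be exactly this diagonal evaluation $\drg{\eqXom}{\om}=a_{k(\om)}(\eqXom)$: the representative identity is available only for each \emph{deterministic} argument and carries an $\om$-dependent exceptional set, so passing to the random argument $\eqXom$ genuinely requires the separability reduction to $D$ combined with the continuity of the paths. The analogous statements for the other risk-consistent properties would follow by repeating this localization and diagonal scheme verbatim.
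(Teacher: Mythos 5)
The paper omits its own proof of this proposition, so there is nothing to compare against line by line; your argument is correct and is exactly the kind of countable-dense-set-plus-continuous-paths reduction the authors use elsewhere (e.g.\ in \thref{LgAC} and in the uniqueness part of \thref{T1}). The key identity $\drg{\eqXom}{\om}=a_{k(\om)}(X(\om))$ a.s.\ is established soundly, and since it makes the hypotheses of risk-antitonicity and of \eqref{hannes:discstrucra} coincide up to null sets while their conclusions are literally identical, the equivalence follows.
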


\begin{Remark}
Notice that in the setting of \thref{P1}, we had to require that there exists a realization with continuous paths. Sufficient criteria for $\rG$ which guarantee that such a continuous realizations exists are well known, e.g.\ Kolmogorov's criterion (see e.g.\ Theorem 2.1 in \cite{Revuz1999}). A sufficient specification of a CSRM solely in terms of $\rG$ (without employing any realization) is thus: if $\rG$ is antitone on constants, has a discrete structure \eqref{gregor:eq:2} and fulfills \eqref{hannes:discstrucra} and Kolmogorov's criterion, then $\rG$ is a CSRM.
\end{Remark}


In order to state our decomposition result we need to clarify what we mean by a (conditional) aggregation function and a conditional base risk measure. We start with the aggregation function.  

\begin{Definition}[Aggregation Functions]\th\label{def:agg}~\\
We call a function $\dL: \R^d\to \R$ a \emph{deterministic aggregation function} (DAF), if it has the following two properties: 
\begin{description}
\item[\emph{Isotonicity:}] If $x,y\in\R^d$ with $x \geq y$, then $\dL(x)\geq\dL(y)$;
\item[\emph{Continuity:}] $\dL$ is continuous.
\end{description}
A DAF is called concave or positive homogeneous, respectively, if it satisfies for all $x,y\in\R^d$
\begin{description}
\item[\emph{Concavity:}] If $\lambda\in[0,1]$, then $\dL\big(\lambda x+(1-\lambda) y\big)\geq \lambda\dL(x)+(1-\lambda)\dL(y)$;
\item[\emph{Positive homogeneity:}] $\dL(\lambda x)=\lambda\dL(x)$ for all $\lambda\geq0$.
\end{description}

\smallskip\noindent
Furthermore, a function $\dLG:\R^d\times\WR\to \R$ is a \emph{conditional aggregation function} (CAF), if 
\begin{enumerate}
\item\label{Lgmb} $\dLg{x}{\cdot}\in\SLinf{\CG}{}$ for all $x\in\R^d$,
\item\label{LgDAF} $\dLg{\cdot}{\om}$ is a DAF for all $\om\in\WR$.
\end{enumerate}
A CAF is called concave (positive homogeneous) if $\dLg{\cdot}{\om}$ is concave (positive homogeneous) for all  $\om\in\WR$.
\end{Definition}

\begin{Remark}\th\label{prodmb}
Note that, functions like CAFs which are continuous in one argument and measurable in the other also appear under the name of Carath\'eodory functions in the literature on differential equations. 
For Carath\'eodory functions it is well known (see e.g.\ \cite{Aubin2009} Lemma 8.2.6) that they are product measurable, i.e.\ every CAF $\dLG$ is $\CB(\R)\times\CG$-measurable.
\end{Remark}

Given a CAF $\dLG$, we extend the aggregation from deterministic to random vectors in the following way (which is well-defined due to \thref{prodmb} as well as isotonicity and property (i) in the definition of a CAF):
\begin{align} \label{def:AssCaf}
\cLG : & \ \ \Linf{\CF}{d} \to \Linf{\CF}{}, \quad 
   \eqX \mapsto \eqdLg{X(\om)}{\om} \,.
\end{align}
\begin{Remark}\th\label{hannes:pathwiseconcave}
Notice that the aggregation \eqref{def:AssCaf} of random vectors $X$ is $\om$-wise in the sense that given a certain state $\om\in \WR$, in that state we aggregate the sure payoff $X(\om)$. Consequently, properties such as isotonicity, concavity or positive homogeneity of the CAF $\dLG$ translate to the extended CAF $\cLG$. Hence, $\cLG$ always satisfies
\begin{equation}\label{hannes:CFisotone}
\cLg{\eqX}\geq\cLg{\eqY}\; \mbox{for all $\eqX,\eqY\in\Linf{\CF}{d}$ with $\eqX\geq\eqY$}.
\end{equation}
If $\dLG$ is concave, then for all $\eqX,\eqY\in\Linf{\CF}{d}$ and $\eqalpha\in\Linf{\CF}{}$ with $0\leq\alpha\leq1$ we have 
\begin{equation}\label{hannes:CFconcave}
\cLg{\eqalpha \eqX+(\eqone-\eqalpha) \eqY}\geq \eqalpha\cLg{\eqX}+(\eqone-\eqalpha)\cLg{\eqY},  
\end{equation}
and if $\dLG$ is positively homogeneous, then for all $\eqX\in\Linf{\CF}{d}$ and $\eqalpha\in\Linf{\CF}{}$ with $\alpha\geq0$:
\begin{equation}\label{hannes:CFposhomo}
\cLg{\eqalpha \eqX}= \eqalpha\cLg{\eqX}.
\end{equation}
\end{Remark}


The last yet undefined ingredient in our decomposition \eqref{eq:aim1} is the conditional base risk measure $\rGn$ which we define next. Notice that the domain $\CX$ of $\rGn$ depends on the underlying aggregation given by $\rG$. For example the aggregation function $\dL(x)=\sum_{i=1}^d \min\{x_i,0\},x\in\R^d$ only considers the losses. Hence, the corresponding base risk measure $\eta$ a priori only needs to be defined on the negative cone of $\Linf{\CF}{}$, even though it in many cases allows for an extension to $\Linf{\CF}{}$.
We will see in \thref{l1} that if $\CX$ is the image of an extended CAF $\cLG$ then $\CX$ is $\CG$-conditionally convex, i.e.\ $\eqF,\eqG\in\CX$ and $\eqalpha\in\Linf{\CG}{}$ with $0\leq\alpha\leq1$ implies $\eqalpha \eqF+(\eqone-\eqalpha)\eqG\in\CX$.

\begin{Definition}[Conditional Base Risk Measure]\th\label{def:CBRM}~\\
Let $\CX\subseteq\Linf{\CF}{}$ be a $\CG$-conditionally convex set.
A function $\rGn: \CX\to \Linf{\CG}{}$ is a \emph{conditional base risk measure} (CBRM), if it is
\begin{description}
\item[\emph{Antitone:}] $\eqF\geq \eqG$ implies $\rGn(\eqF)\leq\rGn(\eqG)$.
\end{description}
Moreover, we will also consider CBRM's which fulfill additionally one or more of the following properties:
\begin{description}
\item[\emph{Constant on constants:}] $\rGn(\eqalpha)=-\eqalpha$ for all $\eqalpha\in\CX\cap\Linf{\CG}{}$;
\item[\emph{Quasiconvexity:}]
$\rGn\left(\eqalpha \eqF+(\eqone-\eqalpha) \eqG\right)\leq \rGn(\eqF)\vee \rGn(\eqG)$ for all $\eqalpha\in\Linf{\CG}{}$ with $0\leq\alpha\leq1$;
\item[\emph{Convexity:}] $\rGn\left(\eqalpha \eqF+(\eqone-\eqalpha)\eqG\right)\leq\eqalpha\rGn(\eqF)+(\eqone-\eqalpha)\rGn(\eqG)$ for all $\eqalpha\in\Linf{\CG}{}$ with $0\leq\alpha\leq1$;
\item[\emph{Positive homogeneity:}] $\rGn(\eqalpha \eqF)=\eqalpha\rGn(\eqF)$ for all $\eqalpha\in \Linf{\CG}{}$ with $\alpha\geq0$ and $\eqalpha \eqF\in\CX$.
\end{description}
\end{Definition}
Constructing a CSRM by composing a CBRM and a CAF as in \eqref{eq:aim1}, we need a property for $\rGn$ which allows to 'extract' the CAF in order to obtain the properties on constants of $\rG$. The constant on constants property serves this purpose, but we will see in \thref{T1} that the following weaker property is also sufficient.
\begin{Definition}\th\label{def:ConstOnAggr}
A CBRM $\rGn:\CX\to\Linf{\CG}{}$ is called \emph{constant on a CAF $\dLG$}, if $\cLG(\eqx)\in\CX$ for all $x\in\R^d$ and
\begin{equation}\label{constancy}
\rGn\left(\cLg{\eqx}\right)=-\cLg{\eqx}\; \mbox{for all $x\in\R^d$.}
\end{equation}
\end{Definition}
Clearly, if $\rGn$ is constant on constants, then it is constant on any CAF with an appropriate image as \eqref{constancy} is always satisfied.

Conditional risk measures have been widely studied in the literature, see \cite{Follmer2011} for an overview. As already explained above the antitonicity is widely accepted as a minimal requirement for risk measures. The constant on constants property is a standard technical assumption, whereas we will only need the weaker property of constancy on an aggregation function for an CBRM. Typically conditional risk measures are also required to be monetary in the sense that they satisfy some translation invariance property which we do not require in our setting, see e.g.\ \cite{Detlefsen2005}.  Much of the literature is concerned with the study of quasiconvex or convex conditional risk measures which in our setting implies that the corresponding risk-consistent conditional systemic risk measure will satisfy risk-quasiconvexity resp.\ risk-convexity, see \thref{T1}.


After introducing all objects and properties of interest we are now able to state our decomposition theorem.
\begin{Theorem}\th\label{T1} A function $\rG: \Linf{\CF}{d}\to\Linf{\CG}{}$ is a CSRM
if and only if there exists a CAF $\dLG:\R^d\times\WR\to \R$ and a CBRM $\rGn: \Imag\cLG\to \Linf{\CG}{}$ such that $\rGn$ is constant on $\dLG$ (\thref{def:ConstOnAggr}) and 
\begin{equation}\label{decomp}
\rG \left(\eqX\right)=\rGn\left(\cLg{\eqX}\right)\quad \text{for all $\eqX\in \Linf{\CF}{d}$},
\end{equation}
where the extended CAF $\cLg{\eqX}:= \eqdLg{X(\om)}{\om}$ was introduced in \eqref{def:AssCaf}. The decomposition into $\rGn$ and $\cLG$ is unique.\\
Furthermore there is a one-to-one correspondence between additional properties of the CBRM $\rGn$ and additional risk-consistent properties of the CSRM $\rG$:
\begin{itemize}
\item $\rG$ is risk-convex iff $\rGn$ is convex;
\item $\rG$ is risk-quasiconvex iff $\rGn$ is quasiconvex;
\item $\rG$ is risk-positive homogeneous iff $\rGn$ is positive homogeneous;
\item $\rG$ is risk-regular iff $\rGn$ is constant on constants.
\end{itemize}  
Moreover, properties on constants of the CSRM $\rG$ are related to properties of the CAF $\dLG$:
\begin{itemize}
\item $\rG$ is convex on constants iff $\dLG$ is concave;
\item $\rG$ is positive homogeneous on constants iff $\dLG$ is positive homogeneous.
\end{itemize}  
\end{Theorem}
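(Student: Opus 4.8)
The plan is to prove the two implications of the equivalence, then uniqueness, and finally the two lists of correspondences. The guiding idea is that a CSRM (\thref{DEFCSRM}) is essentially pinned down by its values on constants: since $\rGn$ must be constant on $\dLG$, the decomposition \eqref{decomp} forces $\crg{\eqx}{\om}=-\dLg{x}{\om}$, so the CAF can only be the pathwise negative of the continuous realization of $\rG$ on constants, while the CBRM can only be the assignment $\rGn(\cLg{\eqX}):=\rG(\eqX)$ on $\Imag\cLG$. Risk-antitonicity is precisely the property that makes this last assignment well defined, and this is the conceptual heart of the argument.

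For the ``only if'' direction I would start from a CSRM $\rG$ with continuous-path realization $\drG$ and define $\dLg{x}{\om}:=-\drg{x}{\om}$. Continuity of $\dLg{\cdot}{\om}$ is inherited from the continuous paths, and $\dLg{x}{\cdot}\in\SLinf{\CG}{}$ holds since $\drg{x}{\cdot}\in\SLinf{\CG}{}$, giving property (i) of a CAF. Pathwise isotonicity of $\dLg{\cdot}{\om}$ requires upgrading the a.s.\ inequalities from antitonicity on constants (one null set per pair $x\ge y$) to a single a.s.\ statement valid for all $x\ge y$ simultaneously; I would do this by intersecting over rational pairs in $\Q^d$, discarding the countable union of null sets, and extending to all real pairs by continuity, after which a null-set modification yields a genuine CAF. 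I would then define $\rGn$ on $\Imag\cLG$ by $\rGn(\cLg{\eqX}):=\rG(\eqX)$. Well-definedness is the crux: if $\cLg{\eqX}=\cLg{\eqY}$ in $\Linf{\CF}{}$ then $\drg{\eqXom}{\om}=\drg{\eqYom}{\om}$ a.s., and applying risk-antitonicity in both directions gives $\rG(\eqX)=\rG(\eqY)$. Antitonicity of $\rGn$ is the same computation: $\cLg{\eqX}\ge\cLg{\eqY}$ translates into $\drg{\eqXom}{\om}\le\drg{\eqYom}{\om}$ a.s., so risk-antitonicity yields $\rG(\eqX)\le\rG(\eqY)$. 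Conditional convexity of the domain $\Imag\cLG$ is supplied by \thref{l1}, constancy on $\dLG$ follows from $\rGn(\cLg{\eqx})=\rG(\eqx)=-\cLg{\eqx}$, and \eqref{decomp} holds by construction.

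For the ``if'' direction I would set $\rG:=\rGn\circ\cLG$. Antitonicity on constants follows from pathwise isotonicity of the CAF and antitonicity of $\rGn$. To exhibit a realization with continuous paths I would select, for constant $\eqx$, the representative $-\dLg{x}{\cdot}\in\SLinf{\CG}{}$ of the class $\rG(\eqx)=\rGn(\cLg{\eqx})=-\cLg{\eqx}$ (using constancy on $\dLG$), so that $\drg{x}{\om}=-\dLg{x}{\om}$ has continuous paths because each $\dLg{\cdot}{\om}$ is a DAF; for non-constant $X$ any representative will do. Risk-antitonicity then reads: $\drg{\eqXom}{\om}\ge\drg{\eqYom}{\om}$ a.s.\ means $\cLg{\eqX}\le\cLg{\eqY}$, whence $\rGn(\cLg{\eqX})\ge\rGn(\cLg{\eqY})$. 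Uniqueness follows because any admissible decomposition must satisfy $-\cLg{\eqx}=\rG(\eqx)$ for all constants $x$, which pins down $\dLG$ on $\Q^d$ and hence, by continuity of paths, everywhere up to a null set; the images then coincide and $\rGn(\cLg{\eqX})=\rG(\eqX)$ pins down $\rGn$ on $\Imag\cLG$.

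For the correspondences I would translate each pathwise hypothesis into an identity for $\cLG$ in $\Linf{\CF}{}$ via $\drg{\eqXom}{\om}=-\cLg{\eqX}(\om)$, apply the decomposition, and read off the property of $\rGn$, and conversely. For the properties on constants this is a direct pathwise computation: convexity on constants of $\rG$ amounts to $\dLg{\lambda x+(1-\lambda)y}{\om}\ge\lambda\dLg{x}{\om}+(1-\lambda)\dLg{y}{\om}$ a.s., i.e.\ pathwise concavity of $\dLG$ (again via rational $\lambda,x,y$ and continuity), and positive homogeneity on constants is the analogous identity. For the risk-consistent properties the ``$\Leftarrow$'' directions need witnesses: given $F,G\in\Imag\cLG$ and $\eqalpha\in\Linf{\CG}{}$, conditional convexity (\thref{l1}) provides $Z$ with $\cLg{\eqZ}=\eqalpha F+(\eqone-\eqalpha)G$, which is exactly the hypothesis of risk-(quasi)convexity, and similarly one produces $Y$ with $\cLg{\eqY}=\eqalpha\cLg{\eqX}$ in the homogeneous case. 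The step needing more than bookkeeping is risk-regularity $\Leftrightarrow$ constant on constants: one inclusion is immediate since $\cLg{\eqX}\in\Imag\cLG\cap\Linf{\CG}{}$ whenever $X$ is $\CG$-measurable, but for the other I expect to invoke a measurable selection, choosing for a given $\alpha\in\Imag\cLG\cap\Linf{\CG}{}$ a $\CG$-measurable $X$ with $\dLg{X(\om)}{\om}=\alpha(\om)$ a.s.\ (the level set is $\CB(\R^d)\times\CG$-measurable by \thref{prodmb} and has nonempty sections a.s.\ since $\alpha\in\Imag\cLG$); risk-regularity then gives $\rGn(\alpha)=\rG(X)=-\alpha$. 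I expect this measurable-selection step, together with the ``good version'' passage from a.s.\ to pathwise statements, to be the main technical obstacles, while the role of risk-antitonicity in well-definedness is the conceptual core.
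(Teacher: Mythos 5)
Your proposal is correct and follows essentially the same route as the paper's proof: extract the CAF as the pathwise negative of the continuous realization on constants (upgrading a.s.\ isotonicity/concavity to pathwise statements via rational arguments, continuity and a modification), define the CBRM on $\Imag\cLG$ through the decomposition identity with risk-antitonicity supplying well-definedness, and use \thref{l1} both for the conditional convexity of the domain and for the measurable-selection step in the risk-regularity/constant-on-constants equivalence. The only deviations are cosmetic simplifications (choosing $-\dLg{x}{\cdot}$ directly as the representative on constants rather than passing through a realization of $\rGn$ and \thref{LgAC}, and omitting the simple-function approximation in the uniqueness argument, which is indeed dispensable once the two CAFs agree for all $x$ off a universal nullset).
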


The proof of \thref{T1} is quite lengthy and needs some additional preparation and is thus postponed to Section~\ref{SEC:PROOF}. Note that it follows from the proof of \thref{T1} that the aggregation rule in \eqref{decomp} is deterministic if and only if $\rho_\CG(\R^d)\subseteq\R$.
\begin{Remark}
The decomposition \eqref{decomp} can also be established without requiring the CSRM to be risk-antitone, but to fulfill the weaker property 
\begin{equation}\label{hannes:PC} 
\drg{\eqXom}{\om}=\drg{\eqYom}{\om}\text{ a.s.}\Longrightarrow\rG(\eqX)=\rG(\eqY).
\end{equation}
Notice, however, if we only require \eqref{hannes:PC}, then the CBRM $\eta_\CG$ in \eqref{decomp} (and also $\rG$ itself, see \thref{T2} below) might not be antitone anymore.
\end{Remark}
An important question is to which degree CSRM's fulfill the usual (conditional) axioms of risk measures on $\Linf{\CF}{d}$ (where these axioms on $\Linf{\CF}{d}$ are defined analogously to the ones on $\Linf{\CF}{}$ in \thref{def:CBRM}). In the following \thref{T2} we will investigate the relation between risk-consistent properties and properties on constants on the one side and properties of $\rG$ on $\Linf{\CF}{d}$ on the other.
\begin{Theorem}\th\label{T2}
Let $\rG$ be a CSRM. Then 
\begin{itemize}
\item risk-antitonicity together with antitonicity on constants can equivalently be replaced by antitonicity of $\rG$ ($\eqX\geq \eqY$ implies $\rG(\eqX)\leq \rG(\eqY)$) together with \eqref{hannes:PC}.
\end{itemize}
Moreover:
\begin{itemize}
\item $\rG$ is risk-positive homogeneous and positive homogeneous on constants iff $\rG$ is positive homogeneous;
\item If $\rG$ is risk-convex and convex on constants, then $\rG$ is convex;
\item If $\rG$ is risk-quasiconvex and convex on constants, then $\rG$ is quasiconvex.
\end{itemize} 
\end{Theorem}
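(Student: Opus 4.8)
The plan is to read everything off the decomposition $\rG(\eqX)=\rGn(\cLg{\eqX})$ furnished by \thref{T1}, together with the one-to-one correspondences established there between risk-consistent properties of $\rG$ and properties of $\rGn$, and between properties on constants of $\rG$ and properties of the CAF $\dLG$. The three ``Moreover'' bullets (positive homogeneity, convexity, quasiconvexity) are then essentially compositions: a property of $\rG$ on $\Linf{\CF}{d}$ is obtained by feeding the pathwise (concavity / positive homogeneity) behaviour of the extended CAF $\cLG$ recorded in \thref{hannes:pathwiseconcave} into the corresponding property of the antitone $\rGn$, the point being that a $\CG$-measurable weight is in particular $\CF$-measurable, so \eqref{hannes:CFconcave} and \eqref{hannes:CFposhomo} apply. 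The first bullet, the antitonicity equivalence, is of a different nature and is where the genuinely new work lies.

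For positive homogeneity I would argue both implications through the decomposition. If $\rG$ is risk-positive homogeneous and positive homogeneous on constants, then by \thref{T1} $\rGn$ and $\cLG$ are both positive homogeneous, so for $\alpha\in\Linf{\CG}{}$ with $\alpha\geq0$, using \eqref{hannes:CFposhomo} and then positive homogeneity of $\rGn$, $\rG(\alpha\eqX)=\rGn(\cLg{\alpha\eqX})=\rGn(\alpha\cLg{\eqX})=\alpha\rGn(\cLg{\eqX})=\alpha\rG(\eqX)$. Conversely, restricting positive homogeneity of $\rG$ to constant vectors gives positive homogeneity on constants, hence $\cLG$ is positive homogeneous; and for $F=\cLg{\eqX}\in\Imag\cLG$ one checks (using $\alpha F=\cLg{\alpha\eqX}$) that $\rGn(\alpha F)=\rG(\alpha\eqX)=\alpha\rG(\eqX)=\alpha\rGn(F)$, so $\rGn$ is positive homogeneous and \thref{T1} returns risk-positive homogeneity. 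For convexity: risk-convexity and convexity on constants give $\rGn$ convex and $\cLG$ concave, so concavity \eqref{hannes:CFconcave} yields $\cLg{\alpha\eqX+(1-\alpha)\eqY}\geq\alpha\cLg{\eqX}+(1-\alpha)\cLg{\eqY}$, antitonicity of $\rGn$ turns this into ``$\leq$'', and convexity of $\rGn$ finishes; here I must note that $\alpha\cLg{\eqX}+(1-\alpha)\cLg{\eqY}$ lies in $\Imag\cLG$ so that $\rGn$ may be applied to it, which is precisely the $\CG$-conditional convexity of $\Imag\cLG$ recorded in \thref{l1}. Quasiconvexity is identical, with the convexity step of $\rGn$ replaced by its quasiconvexity.

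For the first bullet I would treat the two directions separately. Assume first risk-antitonicity and antitonicity on constants. Then \eqref{hannes:PC} is immediate, since $\drg{\eqXom}{\om}=\drg{\eqYom}{\om}$ a.s.\ makes both inequalities in the hypothesis of risk-antitonicity hold, forcing $\rG(\eqX)=\rG(\eqY)$. For antitonicity of $\rG$ on $\Linf{\CF}{d}$ I would first upgrade antitonicity on constants to a pathwise statement: for each rational pair $p\geq q$ in $\Q^d$ it gives $\crg{p}{\om}\leq\crg{q}{\om}$ off a null set, and taking the countable union of these null sets and invoking continuity of $\drG(\cdot,\om)$ with monotone rational approximation from above and below yields that $\drG(\cdot,\om)$ is antitone on all of $\R^d$ a.s.; then $\eqX\geq\eqY$ gives $\drg{\eqXom}{\om}\leq\drg{\eqYom}{\om}$ a.s.\ and risk-antitonicity delivers $\rG(\eqX)\leq\rG(\eqY)$. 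For the converse, antitonicity on constants is just the restriction of antitonicity of $\rG$ to constant vectors; the real work is recovering risk-antitonicity from antitonicity on $\Linf{\CF}{d}$ and \eqref{hannes:PC}. Here I would invoke the Remark following \thref{T1}: property \eqref{hannes:PC}, with antitonicity on constants and the continuous realization, already produces a decomposition $\rG=\rGn\circ\cLG$ with $\cLG$ a CAF and $\rGn$ constant on $\cLG$ but not yet known to be antitone, so it suffices to prove $\rGn$ antitone, for then $\rGn$ is a genuine CBRM constant on $\cLG$ and \thref{T1} returns that $\rG$ is a CSRM, in particular risk-antitone.

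To see $\rGn$ antitone, take $F=\cLg{\eqX}\geq G=\cLg{\eqY}$ in $\Imag\cLG$ and construct a dominated surrogate of $\eqY$: with $w:=\eqX\wedge\eqY\leq\eqX$ isotonicity \eqref{hannes:CFisotone} gives $\cLg{w}\leq G\leq F$ pathwise, so the continuous map $t\mapsto\dLg{(1-t)w(\om)+t\eqXom}{\om}$ runs from a value $\leq G(\om)$ at $t=0$ to $F(\om)\geq G(\om)$ at $t=1$; setting $t^\ast(\om):=\inf\{t\in[0,1]:\dLg{(1-t)w(\om)+t\eqXom}{\om}\geq G(\om)\}$ and $\eqZ:=(1-t^\ast)w+t^\ast\eqX$ yields $\eqZ\leq\eqX$ with $\cLg{\eqZ}=G$. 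By \eqref{decomp}, $\rG(\eqZ)=\rGn(G)$ while $\rG(\eqX)=\rGn(F)$, and antitonicity of $\rG$ gives $\rGn(F)=\rG(\eqX)\leq\rG(\eqZ)=\rGn(G)$, so $\rGn$ is antitone. I expect the main obstacle to sit exactly in this construction: one must (a) verify that $t^\ast$ defines a measurable, essentially bounded selection $\eqZ\in\SLinf{\CF}{d}$, which follows from the Carath\'eodory (hence jointly measurable, cf.\ \thref{prodmb}) nature of $\dLG$ and the measurability of threshold crossings; and (b) realize that the surrogate must be built on the aggregation level rather than on the level of $\drG$, because the analogous segment argument fails for $\drG$ directly --- antitonicity of $\drG(\cdot,\om)$ pushes the value at $w$ to the wrong side of the target, whereas isotonicity of $\cLG$ makes the intermediate-value crossing available. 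This is why the converse genuinely needs the decomposition and not the realization $\drG$ alone.
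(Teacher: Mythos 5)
Your proof is correct, and for the three ``Moreover'' bullets it is essentially the paper's own argument viewed through the decomposition: the paper works directly with the realization $\drG$ (upgrading convexity/positive homogeneity on constants to almost-sure pathwise properties and then passing through an ``inequality form'' of risk-convexity obtained via \thref{l1} and risk-antitonicity), which is exactly what your chain ``concavity of $\cLG$, then antitonicity of $\rGn$, then convexity of $\rGn$'' unpacks to; your explicit appeal to the $\CG$-conditional convexity of $\Imag\cLG$ is the same use of \thref{l1}. The genuine divergence is in the converse of the first bullet. The paper stays on the level of the realization: it uses \thref{l1} (Filippov's theorem) to replace $\eqX$ and $\eqY$ by diagonal vectors $\eqF\eqvecone$, $\eqG\eqvecone$ with the same $\drG$-values, exploits the almost surely antitone paths to get $\eqF\geq\eqG$ on the set where the inequality is strict, splices $\eqH:=\eqG\eqind_A+\eqF\eqind_{A^C}$, and concludes directly from \eqref{hannes:PC} and antitonicity of $\rG$ --- thereby proving risk-antitonicity without re-entering \thref{T1}. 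You instead prove antitonicity of the extracted $\rGn$ by constructing a dominated surrogate $\eqZ\leq\eqX$ with $\cLg{\eqZ}=\eqG$ via a first-crossing time along the segment from $\eqX\wedge\eqY$ to $\eqX$, and then bootstrap through the ``$\Leftarrow$'' direction of \thref{T1}. Both routes are measurable-selection arguments at heart; yours trades Filippov's theorem for a d\'ebut-type measurability check on $t^\ast$, which does go through since $(t,\om)\mapsto\dLg{(1-t)w(\om)+tX(\om)}{\om}$ is jointly measurable and continuous in $t$, so that $\{t^\ast\leq s\}=\{\om:\sup_{t\in[0,s]\cap\Q}\dLg{(1-t)w(\om)+tX(\om)}{\om}\geq G(\om)\}$ is measurable, and it buys the arguably more natural intermediate statement that $\rGn$ itself is antitone. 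One small correction: your closing observation (b), that the segment argument ``fails for $\drG$ directly,'' is immaterial --- since $\drG(\cdot,\om)=-\dLg{\cdot}{\om}$ almost surely, the identical crossing argument works on $\drG$ with all inequalities reversed, and indeed the paper's own converse lives entirely on the $\drG$ level.
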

As for \thref{T1} we postpone the proof to Section~\ref{SEC:PROOF}.
\begin{Remark}
We have seen in \thref{T2} that a property on $\Linf{\CF}{d}$ of a CSRM is implied by the corresponding risk-consistent property and the property on constants. The reverse is only true for the antitonicity and positive homogeneity. To see this we give a counterexample for the convex case. Suppose that $\dLG(x):=u^{-1}\left(\sum_{i=1}^d x_i\right)$ and $\rGn(\eqF):=-u^{-1}\left(\BEW{\PW}{u(\eqF)}{\CG}\right)$, where $u:\R\to\R$ is a strictly increasing and convex function. Then it can be easily verified that $u^{-1}$ is strictly increasing and concave. Hence $\dLG$ is a concave CAF and $\rGn$ is a CBRM. Nevertheless, there are functions $u$ such that $\rGn$ is not a convex CBRM, e.g.\ $u(c)=c\ind_{\{c\leq0\}}+a c\ind_{\{c>0\}},a>1$.
According to \thref{T1} we get a CSRM $\rG$ by composing $\cLG$ and $\rGn$, which is explicitly given by
$$\rG(\eqX)=-u^{-1}\left(\BEW{\PW}{\sum_{i=1}^d \eqX_i}{\CG}\right).$$
It is obvious that $\rG$ is convex. But since $\rGn$ is not convex, $\rG$ cannot be risk-convex by \thref{T1}.
\end{Remark}

\section{Proof of Theorem~\ref{T1} and \ref{T2}}\label{SEC:PROOF}
Before we state the proofs of Theorems~\ref{T1} and \ref{T2}, we provide some auxiliary results.
\begin{Lemma}\th\label{l1}
Let $\dLG:\R^d\times \WR\to \R$ be a CAF and let $\CH$ be a sub-$\sigma$-algebra of $\CF$ such that $\CG\subseteq\CH\subseteq\CF$. Then 
\begin{equation}\label{reich}
\cLg{\Linf{\CH}{d}}\subseteq\Linf{\CH}{},
\end{equation}
and for every $\eqX,\eqY\in\Linf{\CH}{}$ and $\eqalpha\in\Linf{\CG}{}$ with $0\leq\alpha\leq1$ there is an $\eqF\in\Linf{\CH}{}$ such that 
$$\eqalpha\cLG(\eqX)+(\eqone-\eqalpha)\cLG(\eqY)= \cLG(\eqF\eqvecone).$$
In particular this implies that the image of $\cLG$ is $\CG$-conditionally convex.\\
Conversely, we have that
$$\Linf{\CH}{}\cap \Imag\cLG\subseteq \cLg{\Linf{\CH}{d}}.$$
\end{Lemma}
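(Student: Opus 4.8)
The plan is to build everything around the one-dimensional diagonal profile of the aggregation. For fixed $\om$ set $g_\om(t):=\dLg{t\vecone}{\om}$; since $\dLg{\cdot}{\om}$ is a DAF, $g_\om:\R\to\R$ is continuous and nondecreasing, so its range is an interval and it admits a monotone generalized inverse. The two external facts I use repeatedly are the product measurability of a CAF (\thref{prodmb}), giving $\CB(\R^d)\times\CG$-measurability of $\dLG$, and isotonicity together with continuity of each $\dLg{\cdot}{\om}$. For the inclusion $\cLg{\Linf{\CH}{d}}\subseteq\Linf{\CH}{}$, fix $\eqX\in\Linf{\CH}{d}$; since $\CG\subseteq\CH$, product measurability upgrades to $\CB(\R^d)\times\CH$-measurability, and $\om\mapsto(X(\om),\om)$ is $\CH$-$(\CB(\R^d)\times\CH)$-measurable, so the composition $\cLg{\eqX}(\om)=\dLg{X(\om)}{\om}$ is $\CH$-measurable. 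If $c$ bounds $\eqX$ a.s., isotonicity gives $\dLg{-c\vecone}{\cdot}\leq\cLg{\eqX}\leq\dLg{c\vecone}{\cdot}$ a.s., and the two outer terms lie in $\SLinf{\CG}{}$ by property (i) in the definition of a CAF, whence $\cLg{\eqX}\in\Linf{\CH}{}$.

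For the convex-combination statement, fix $\eqX,\eqY\in\Linf{\CH}{d}$ and $\eqalpha\in\Linf{\CG}{}$ with $0\leq\alpha\leq1$, and put $Z:=\eqalpha\cLg{\eqX}+(\eqone-\eqalpha)\cLg{\eqY}$, which lies in $\Linf{\CH}{}$ by the first part. Isotonicity sandwiches $\cLg{\eqX}(\om)$ between $g_\om(\min_i X_i(\om))$ and $g_\om(\max_i X_i(\om))$, so it belongs to the interval $\mathrm{range}(g_\om)$; the same holds for $\cLg{\eqY}(\om)$, hence their convex combination $Z(\om)$ lies in $\mathrm{range}(g_\om)$ as well, and by the intermediate value theorem $g_\om(t)=Z(\om)$ is solvable. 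I realize this solution by the measurable selection $F(\om):=\inf\{t\in\R:\dLg{t\vecone}{\om}\geq Z(\om)\}$. The crucial identity is $\{F\leq a\}=\{\dLg{a\vecone}{\cdot}\geq Z\}$ for every $a\in\R$ (monotonicity makes $\{t:g_\om(t)\geq Z(\om)\}$ an up-set, continuity makes it closed), which is $\CH$-measurable because $\dLg{a\vecone}{\cdot}$ is $\CG$- hence $\CH$-measurable and $Z$ is $\CH$-measurable; thus $F$ is $\CH$-measurable. Continuity forces the infimum to be attained, so $\cLg{F\vecone}=Z$, and boundedness of $\eqX,\eqY$ confines $F$ to a bounded interval, giving $F\vecone\in\Linf{\CH}{d}$. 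Taking $\CH=\CF$ and letting $\cLg{\eqX},\cLg{\eqY}$ range over $\Imag\cLG$ yields the $\CG$-conditional convexity of $\Imag\cLG$.

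For the converse inclusion, let $\eqG\in\Linf{\CH}{}\cap\Imag\cLG$ and pick any $\eqX\in\Linf{\CF}{d}$ with $\cLg{\eqX}=\eqG$. The same construction, now with $\eqG$ in place of $Z$, produces $F(\om):=\inf\{t\in\R:\dLg{t\vecone}{\om}\geq\eqG(\om)\}$; since $\eqG(\om)=\dLg{X(\om)}{\om}$ lies in $\mathrm{range}(g_\om)$ by the sandwich above, the infimum is attained and $\cLg{F\vecone}=\eqG$. Measurability of $F$ follows from $\{F\leq a\}=\{\dLg{a\vecone}{\cdot}\geq\eqG\}\in\CH$, and boundedness of $\eqX$ (hence of $\eqG$ between $g_\om(\pm c)$) keeps $F$ bounded, so $F\vecone\in\Linf{\CH}{d}$ and $\eqG\in\cLg{\Linf{\CH}{d}}$.

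The one genuinely delicate point is the measurable selection of $F$: producing a single $\CH$-measurable, bounded random variable that solves $g_\om(F(\om))=Z(\om)$ simultaneously for a.e.\ $\om$. Everything else—the boundedness estimates, the range/intermediate-value observation, and the passage to $\CG$-conditional convexity—is routine once the scalar profile $g_\om$ is isolated. I expect the clean identity $\{F\leq a\}=\{\dLg{a\vecone}{\cdot}\geq Z\}$ to be the crux, since it reduces measurability of the generalized inverse to the product measurability of $\dLG$ together with $\CG\subseteq\CH$, and thereby lets me avoid invoking a general measurable-selection theorem.
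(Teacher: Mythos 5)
Your proof is correct in substance but replaces the paper's key technical tool with a genuinely different, more elementary one. The paper establishes the existence of the scalar representative $F$ by passing to the completion $(\WR,\widehat\CH,\widehat\PW)$ and invoking Filippov's measurable selection theorem (Theorem 8.2.10 in Aubin--Frankowska), then transporting the selection back to the original space; you instead construct $F$ explicitly as the generalized inverse $F(\om)=\inf\{t:\dLg{t\vecone}{\om}\geq Z(\om)\}$ of the diagonal profile $g_\om(t)=\dLg{t\vecone}{\om}$, and prove its measurability directly via the identity $\{F\leq a\}=\{\dLg{a\vecone}{\cdot}\geq Z\}$, which is valid because monotonicity and continuity make $\{t:g_\om(t)\geq Z(\om)\}$ a closed up-set. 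This buys you two things: no appeal to a general selection theorem, and no detour through the completion of the probability space (which the paper needs only because Filippov requires completeness). The first inclusion and the converse inclusion are handled essentially as in the paper (Carath\'eodory/product measurability plus the isotonicity sandwich for boundedness).

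One point needs a small repair: your claim that boundedness of $\eqX,\eqY$ ``confines $F$ to a bounded interval'' gives only the upper bound $F\leq\xo$ directly. The infimum can be $-\infty$ (or merely smaller than $\underline{x}$) when $g_\om$ is constant at the level $Z(\om)$ on a left half-line --- e.g.\ for the DAF $\dL(x)=\sum_i\max(x_i,0)$ and $Z(\om)=0$ the set $\{t:g_\om(t)\geq Z(\om)\}$ is all of $\R$. The fix is one line: replace $F$ by $\max(F,\underline{x})$; since $g_\om(\underline{x})\leq Z(\om)\leq g_\om(F(\om))$ whenever $F(\om)\leq\underline{x}$, monotonicity forces $g_\om(\underline{x})=Z(\om)$ there, so the truncation preserves $\cLG(\eqF\eqvecone)=\eq{Z}$ while keeping $F$ bounded and $\CH$-measurable. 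Likewise, the sandwich holds only $\PW$-a.s., so $F$ should be redefined (say to $\underline{x}$) on the exceptional $\CH$-measurable nullset. With these adjustments the argument is complete.
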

\begin{proof}
Let $\eqX\in\Linf{\CH}{d}$ and set $F(\om):=\dLg{X(\om)}{\om}$, $\omega\in\Omega$.
Since $\dLG$ is a Carath\'eodory map it follows that $F$ is $\CH$-measurable, cf.\ Lemma 8.2.3 in \cite{Aubin2009}. Let $A:=\{\om\in\WR:~\dLg{X(\om)}{\om}\leq0\}$. Then 
\begin{align}
\supnorm{F}&=\supnorm{\dLg{X(\cdot)}{\cdot}}\leq\supnorm{\dLg{\essinf X}{\cdot}\ind_A}+\supnorm{\dLg{\esssup X}{\cdot}\ind_{A^C}}\nonumber \\
&\leq\supnorm{\dLg{\essinf X}{\cdot}}+\supnorm{\dLg{\esssup X}{\cdot}}<\infty,\label{eq:gregor:1}
\end{align}
where we used the boundedness condition \thref{def:agg}~(i) in the last step and where $\essinf X:=(\essinf X_1,\ldots, \essinf X_d)$, and similarly for $\esssup$.
Hence, we conclude that $\eqF\in\Linf{\CH}{}$.\\
Let $\eqX,\eqY\in\Linf{\CH}{d}$ and $\eqalpha\in\Linf{\CG}{}$ with $0\leq\alpha\leq1$. 
The rest of the proof is based on a measurable selection theorem for which we need that the probability space is complete. 
However, $\Linf{\WR,\CH,\PW}{d}$ and $\Linf{\WR,\widehat\CH,\widehat\PW}{d}$ are isometric isomorph, where $(\WR,\widehat\CH,\widehat\PW)$ denotes the completion of $(\WR,\CH,\PW)$. 
Thus for $\eqX$ and $\eqY$ there exist respective $\eqHX,\eqHY\in\Linf{\widehat\CH}{d}$ and it is easily verified that any representatives of the equivalence classes $\eqHX$ $(\eqHY)$ and $\eqX$ $(\eqY)$ only differ on a $\widehat\PW$-nullset. 
Define 
$$\underline{x}:=\essinf\left(\min_{i=1,...,d}\left(\min(\widehat{X}_i,\widehat{Y}_i)\right)\right)\quad\text{and}\quad\xo:=\esssup\left(\max_{i=1,...,d}\left(\max(\widehat{X}_i,\widehat{Y}_i)\right)\right).$$
Since both $\widehat{X},\widehat{Y}$ are essentially bounded we have that $\underline{x},\xo\in\R$.
Moreover the random variable $G$ which is given for each $\om\in\WR$ by
$$G(\om):=\alpha(\om)\dLg{X(\om)}{\om}+\big(1-\alpha(\om)\big)\dLg{Y(\om)}{\om},$$
is contained in an equivalence class in $\Linf{\CH}{}$ by the first part of the proof and thus we can find a corresponding equivalence class $\eq{\widehat{G}}\in\Linf{\widehat\CH}{}$. 
By isotonicity we have
$$\dLg{\underline{x}\vecone}{\om}\leq \widehat{G}(\om)\leq\dLg{\xo\vecone}{\om}\quad \widehat\PW\text{-}a.s.$$
The continuity of the function $\R\ni x\mapsto \dLg{x\vecone}{\om}$ for each $\om\in\WR$ implies that
$$\widehat{G}(\om)\in\left\{\dLg{x\vecone}{\om}:x\in[\underline{x},\xo]\right\}\quad \widehat{\PW}\text{-}a.s.$$
Finally, we can apply Filippov's theorem (see e.g.\ \cite{Aubin2009} Theorem 8.2.10), that is there exists a $\widehat{\CH}$-measurable selection $\widehat{F}(\om)\in[\underline{x},\xo]$ such that 
$$\widehat{G}(\om)=\dLg{\widehat{F}(\om)\vecone}{\om}\quad \widehat{\PW}\text{-}a.s.$$
For this measurable selection $\eq{\widehat{F}}$ we can find an $\eqF\in\Linf{\CH}{}$ such that $\widehat\PW(\widehat{F}\neq F)=0$. Hence there exists an $\eqF\in\Linf{\CH}{}$ such that
$$\eqalpha\cLG(\eqX)+(\eqone-\eqalpha)\cLG(\eqY)= \cLG(\eqF\eqvecone).$$

For the last part of the proof let $\eqG\in\Imag\cLG\cap\Linf{\CH}{}$, then by definition there exists an $\eqX\in\Linf{\CF}{d}$ such that $\cLG(\eqX)=\eqG$. 
Thus by setting $\underline{x}:=\essinf(\min_{i=1,...,d}X_i)$ and $\xo:=\esssup(\max_{i=1,...,d}X_i)$ we have that
$$\dLg{\underline{x}\vecone}{\om}\leq G(\om)\leq\dLg{\xo\vecone}{\om} \quad a.s.$$
Moreover, since $G$ is $\CH$-measurable, we obtain by a similar argumentation as above that there exists a $\CH$-measurable $F$ with $\underline{x}\leq F\leq\xo$ and $\cLG(\eqF\eqvecone)=\eqG$.
\end{proof}

\begin{Lemma}\th\label{LgAC}
Let $\dLG$ be a conditional aggregation function.
Then there exists a $\PW$-nullset $N$ such that if $x,y\in\R^d$ satisfy $\dLg{x}{\om}=\dLg{y}{\om}\text{ a.s.}$ it holds that $\dLg{x}{\om}=\dLg{y}{\om}\text{ for all }\om\in N^C,$ where $N^C$ denotes the complement of $N$.
\end{Lemma}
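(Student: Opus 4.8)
The plan is to produce the single nullset $N$ as a countable union of nullsets indexed by a countable dense family of \emph{pairs} for which the aggregated values already coincide almost surely, and then to propagate the coincidence to every pair by exploiting that $\dLg{\cdot}{\om}$ is continuous for \emph{every} $\om$. Concretely, I would first introduce the set of all a.s.-coinciding pairs,
\[
W:=\left\{(x,y)\in\R^d\times\R^d:\ \dLg{x}{\om}=\dLg{y}{\om}\ \text{a.s.}\right\}.
\]

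Viewing $W$ as a subset of the separable metric space $\R^d\times\R^d$, it is itself separable, so I would fix a countable dense subset $\{(x_n,y_n)\}_{n\in\N}\subseteq W$. For each $n$, property (i) in \thref{def:agg} guarantees $\dLg{x_n}{\cdot},\dLg{y_n}{\cdot}\in\SLinf{\CG}{}$, and since $(x_n,y_n)\in W$ the set $M_n:=\{\om\in\WR:\dLg{x_n}{\om}\neq\dLg{y_n}{\om}\}$ is a $\PW$-nullset. I would then set $N:=\bigcup_{n\in\N}M_n$, a $\PW$-nullset as a countable union of nullsets. By construction, for every $\om\in N^C$ we have $\dLg{x_n}{\om}=\dLg{y_n}{\om}$ simultaneously for all $n$.

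It remains to upgrade this to an arbitrary pair. Given $x,y\in\R^d$ with $\dLg{x}{\om}=\dLg{y}{\om}$ a.s., i.e.\ $(x,y)\in W$, density furnishes a subsequence $(x_{n_k},y_{n_k})\to(x,y)$. Fixing any $\om\in N^C$ and using that $\dLg{\cdot}{\om}$ is continuous (the continuity part of the DAF property, which holds for \emph{every} $\om$ by \thref{def:agg}), I would pass to the limit in the identity $\dLg{x_{n_k}}{\om}=\dLg{y_{n_k}}{\om}$ to conclude $\dLg{x}{\om}=\dLg{y}{\om}$, which is the assertion.

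The step I expect to be the crux is the very choice to take a dense family of \emph{pairs} in $\R^d\times\R^d$ rather than a dense family of \emph{points} in $\R^d$. A.s.-coincidences $\dLg{x}{\om}=\dLg{y}{\om}$ a.s.\ may be aligned with directions $x-y$ whose coordinate ratios are irrational and hence invisible to any rational grid of single points, so a naive construction of $N$ from coincidences among rational points would miss the relevant nullsets. Recording the coincidences at the level of pairs, and then transporting them along convergent sequences via the everywhere-continuity of the paths of $\dLG$, is precisely what makes a single universal $N$ possible; isotonicity is not even needed for this argument.
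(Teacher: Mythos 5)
Your proof is correct, but it takes a genuinely different route from the paper's. The paper does not work with the equality relation directly: it introduces the countable family $B=\{(x,y)\in\Q^{2d}:\dLg{x}{\om}\geq\dLg{y}{\om}\text{ a.s.}\}$ of \emph{rational} pairs satisfying the a.s.\ \emph{inequality}, unions the corresponding nullsets, and then, for an arbitrary real pair with $\dLg{x}{\om}\geq\dLg{y}{\om}$ a.s., uses \emph{isotonicity} to produce rational approximants $x_n\downarrow x$, $y_n\uparrow y$ that automatically lie in $B$; path-continuity then gives the pointwise inequality off $N$, and the equality statement follows by applying this in both directions. You correctly identified the crux — a pair with a.s.-equal values need not be approximable by rational pairs with a.s.-equal values — but the paper's way around it is the monotone sandwich via isotonicity, whereas yours is to take a countable dense subset of the set $W$ of a.s.-coinciding pairs itself (legitimate, since every subset of $\R^{2d}$ is separable) and transfer along convergent sequences by continuity. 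Your argument is slightly more general in that it dispenses with isotonicity entirely and would apply to any Carath\'eodory-type function; the paper's argument has the advantage of producing, as a byproduct, the stronger ``inequality'' version of the universal nullset ($\dLg{x}{\om}\geq\dLg{y}{\om}$ a.s.\ implies the same pointwise off $N$), which is the form it implicitly reuses later, e.g.\ in the uniqueness part of the proof of \thref{T1} — though your method adapts to that case as well by replacing $W$ with the set of pairs satisfying the a.s.\ inequality.
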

\begin{proof}
Consider the sets $B:=\{(x,y)\in\Q^{2d}:~\dLg{x}{\om}\geq \dLg{y}{\om}\text{ a.s.}\}$ and $N_{(x,y)}:=\{\om\in\WR:~\dLg{x}{\om}<\dLg{y}{\om}\}$ for $(x,y)\in B.$
By definition $N_{(x,y)}$ is a $\PW$-nullset for all $(x,y)\in B$, but since $B$ has only countable many elements, the same holds true for the union $N:=\bigcup_{(x,y)\in B}N_{(x,y)}$.\\
Now consider $x,y\in\R^d$ such that $\dLg{x}{\om}\geq \dLg{y}{\om}\text{ a.s.}$
We can always find sequences $(x_n)_{n\in\N},(y_n)_{n\in\N}\in\Q^{\N}$ such that $x_n\downarrow x$ and $y_n\uparrow y$ for $n\to\infty$.
The isotonicity of $\dLG$ yields $\dLg{x_n}{\om}\geq\dLg{x}{\om}\geq\dLg{y}{\om}\geq\dLg{y_n}{\om}$ a.s., thus $(x_n,y_n)\in B$ for all $n\in\N$.
Therefore we get for all $\om\in N^C$ that
$$\dLg{x}{\om}=\lim_{n\to\infty}\dLg{x_n}{\om}\geq\lim_{n\to\infty}\dLg{y_n}{\om}=\dLg{y}{\om},$$
where we have used that $\dLg{\cdot}{\om}$ is continuous for every $\om\in \WR$.
As $\dLg{x}{\om}=\dLg{y}{\om}$ a.s.\ implies $\dLg{x}{\om}\geq \dLg{y}{\om}$ a.s.\ and $\dLg{x}{\om}\leq \dLg{y}{\om}$ a.s., the assertion follows.
\end{proof}
Note that the $\PW$-nullset $N$ in \thref{LgAC} is universal in the sense that it does not depend on the pair $(x,y)\in\R^{2d}$.

\begin{proof}[Proof of \thref{T1}]
For the rest of the proof let $\eqX,\eqY\in \Linf{\CF}{d}$. \\
\underline{"$\Leftarrow$"}:\\
Suppose that $\dLG:\R^d\times \WR\to \R$ is a CAF with extended CAF $\cLG:\Linf{\CF}{d}\to \Linf{\CF}{}$, and that $\rGn: \Imag\cLG\to \Linf{\CG}{}$ is a CBRM which is constant on $\dLG$.
Moreover, define the function 
$$\rG: \Linf{\CF}{d}\to\Linf{\CG}{},~\eqX\mapsto\rGn\left(\cLg{\eqX}\right).$$		
First we will show that $\rG$ is antitone (and thus in particular antitone on constants): To this end, let $\eqX\geq\eqY$.
As $\dLg{\cdot}{\om}$ is isotone for all $\om\in\WR$ we know from \eqref{hannes:CFisotone} that also the extended CAF is isotone, i.e. $\cLg{\eqX}\geq\cLg{\eqY}$. By the antitonicity of $\rGn$ we can conclude that
$$\rG\left(\eqX\right)=\rGn\left(\cLg{\eqX}\right)\leq\rGn\left(\cLg{\eqY}\right)=\rG\left(\eqY\right).$$
Next we will show that there exists a realization of $\rG$ with continuous paths and which fulfills the risk-antitonicity.
From \eqref{constancy} and \thref{LgAC} it can be readily seen that we can always find a realization of $\eta_\CG$
and a universal $\PW$-nullset $N$  such that for all $\om\in N^C$
\begin{equation}\label{gregor:eq:4}\rgn{\cLg{\eqx}}{\om}=-\dLg{x}{\om}\text{ for all }x\in\R^d.\end{equation}
Given this realization of $\rGn$ we consider in the following the realization $\rG(\cdot,\cdot)$ of $\rG$ given by
$$\crg{\eqX}{\om}:=\rgn{\cLg{\eqX}}{\om},\; \eqX\in\Linf{\CF}{d}, \om\in\WR.$$ 
The function $\drG:\R^d\times\WR\to\R;~x\mapsto\rG(\eqx,\om)$ has continuous paths (a.s.) because $\dLG$ has continuous paths. 
As for the risk-antitonicity, let $\drg{\eqXom}{\om}\geq\drg{\eqYom}{\om}$ a.s.
By rewriting this in terms of the decomposition, i.e.\ \\ $\rGn\big(\cLg{\eq\eqXom},\om\big)\geq\rGn\big(\cLg{\eq\eqYom},\om\big),$ we realize by \eqref{gregor:eq:4} that
\begin{equation}\label{gregor:eq:5}\dLg{X(\om)}{\om}\leq\dLg{Y(\om)}{\om}\text{ a.s.}\end{equation}
Note that our application of \eqref{gregor:eq:4} relies on the fact that the nullset $N$ in \eqref{gregor:eq:4} does not depend on $x\in \R^d$. As \eqref{gregor:eq:5} is equivalent to $\cLg{\eqX}\leq\cLg{\eqY}$, we conclude that
$$\rG(\eqX)=\rGn(\cLg{\eqX})\geq\rGn(\cLg{\eqY})=\rG(\eqY),$$
where we used the antitonicity of $\rGn$.
Hence, we have proved that $\rG$ is a CSRM.

\smallskip\noindent 
Next we treat the special cases when $\rGn$ and/or $\dLG$ satisfy some extra properties. \\
\emph{Risk-regularity}: Suppose $\rGn$ is constant on constants. Then we have
$$\rG(\eqX)=-\cLg{\eqX}\text{ for all }\eqX\in\Linf{\CG}{d},$$
and thus we obtain for the realization $\crg{\cdot}{\cdot}$ that for all $\eqX\in\Linf{\CG}{d}$
$$\crg{\eqX}{\om}=-\dLg{X(\om)}{\om}\text{ a.s.}$$
As above \eqref{gregor:eq:4} implies that for all $\om \in N^C$  
$$-\dLg{X(\om)}{\om} = \rgn{\cLg{\eq\eqXom}}{\om}=\drg{\eqXom}{\om}.$$
\emph{Risk-quasiconvexity/convexity}: Suppose that $\rGn$ is quasiconvex. We show that $\rG$ is risk-quasiconvex. To this end, suppose there exist $\eqX,\eqY,\eqZ\in\Linf{\CF}{d}$ and an $\eqalpha\in\Linf{\CG}{}$ with $0\leq\alpha\leq1$ such that
$$\drg{\eqZom}{\om}=\alpha(\om)\drg{\eqXom}{\om}+\big(1-\alpha(\om)\big)\drg{\eqYom}{\om}\text{ a.s.}$$ 	
Then, as above, by using \eqref{gregor:eq:4}, it follows that 
$$\cLg{\eqZ}=\eqalpha\cLg{\eqX}+(\eqone-\eqalpha)\cLg{\eqY}.$$
Hence the quasiconvexity of $\rGn$ yields 
\begin{align*}
\rG(\eqZ)	&=\rGn\left(\cLg{\eqZ}\right)=\rGn\left(\eqalpha\cLg{\eqX}+(\eqone-\eqalpha)\cLg{\eqY}\right)\\
&\leq\rGn\left(\cLg{\eqX}\right)\vee\rGn\left(\cLg{\eqY}\right)\\
&=\rG(\eqX)\vee\rG(\eqY).
\end{align*}
Similarly it follows that $\rho_\CG$ is risk-convex whenever $\eta_\CG$ is convex.\\
\emph{Risk-positive homogeneity}: Finally, if $\rGn$ is positively homogeneous, then it is straightforward to see that also $\rG$ is risk-positively homogeneous.\\
\emph{Properties on constants:} Suppose that $\dLG$ is concave or positive homogeneous, then it is an immediate consequence of \eqref{gregor:eq:4} that $\rG$ is convex on constants or positive homogeneous on constants, resp.

	
\noindent\underline{"$\Rightarrow$"}:\\
Let $\crg{\cdot}{\cdot}$ denote a realization of the CSRM $\rG$ such that $\drG$ has continuous paths and the risk-antitonicity holds. We define the function $\HLG:\R^d\times\WR\to \R$ by
	\begin{equation}\label{mm0}
		\HLg{x}{\om}:=-\drg{x}{\om}.
	\end{equation}
We show that $\HLG(\cdot,\om)$ is a DAF for almost all $\om\in\WR$, i.e.\ that it is isotone and continuous. The continuity is obvious by $\eqref{mm0}$. For the isotonicity consider the sets $B:=\left\{(x,y)\in\Q^{2d}:~x\geq y\right\}$ and $A^{(1)}_{(x,y)}:=\{\om\in\WR:~\HLg{x}{\om}<\HLg{y}{\om}\}$ for $(x,y)\in B.$
Since $\rG$ is antitone on constants we obtain that $A^{(1)}:=\bigcup_{(x,y)\in B}A^{(1)}_{(x,y)}$ is a $\PW$-nullset.
Moreover, let $A^{(2)}$ denote the $\PW$-nullset on which $\HLG$ has discontinuous sample paths. Consider $x,y\in\R^d$ such that $x\geq y$, and let $(x_n,y_n)\in B^\N$ be a sequence which converges to $(x,y)$ for $n\to\infty$. Then we get for all $\om\in\left(A^{(1)}\cup A^{(2)}\right)^C$ that
$$
\HLg{x}{\om}=\lim_{n\to\infty}\HLg{x_n}{\om}\geq\lim_{n\to\infty}\HLg{y_n}{\om}=\HLg{y}{\om},
$$ 
and thus the paths $\HLg{\cdot}{\om}$ are isotone a.s. 

The fact that the paths $\HLg{\cdot}{\om}$ are concave (positively homogeneous) a.s.\ whenever $\rG$ is convex on constants (positively homogeneous on constants) follows by a similar approximation argument on the continuous paths which are concave (positively homogeneous) on $\Q^d$.  

Given the above considerations, we choose a modification $\dLG$ of $\HLG$ such that  $\dLG(\cdot, \om)$, is a (concave/positively homogeneous) DAF for all $\om\in\WR$. Note that for $\dLG$ relation \eqref{mm0} is only valid a.s., that is there is a $\PW$-nullset $N$ such that for all $x\in\R^d$ and $\om\in N^C$
\begin{equation}\label{mm1}
\dLg{x}{\om}=-\drg{x}{\om}.
\end{equation}
As $-\drg{x}{\cdot}\in\SLinf{\CG}{}$ and thus also $\dLg{x}{\cdot}\in\SLinf{\CG}{}$ for all $x\in\R^d$ (note that $N\in\CG$), we have shown that $\dLG$ is indeed a CAF.


\smallskip\noindent
Next, we will construct a CBRM $\rGn: \Imag\cLG=:\CX\to \Linf{\CG}{}$ such that $\rho_\CG=\rGn\circ\cLG$ where $\cLG$ is the extended CAF of $\dLG$. For $\eqF\in\CX$ we define
\begin{equation}\label{mm2}
\rGn(\eqF):=\rG(\eqX),
\end{equation}
where $\eqX\in \Linf{\CF}{d}$ is given by
\begin{equation}\label{mm3}
\cLg{\eqX}=\eqF.
\end{equation}
Since $\eqF\in\CX$ the existence of such $\eqX$ is always ensured. By \eqref{mm3} and \eqref{mm2} we obtain the desired decomposition
$$\rGn\left(\cLg{\eqX}\right)=\rG(\eqX),$$
if $\rGn$ is well-defined. In order to show the latter, let $\eqXh{1},\eqXh{2}\in \Linf{\CF}{d}$ such that $$\cLg{\eqXh{1}}=\cLg{\eqXh{2}}=\eqF,$$ which by definition of $\cLG$ in \eqref{def:AssCaf} can be rewritten as
$$\dLg{\Xh{1}(\om)}{\om}=F(\om)=\dLg{\Xh{2}(\om)}{\om}\text{ a.s.}$$
By \eqref{mm1} this can be restated in terms of $\drg{\cdot}{\cdot}$ as
$$\drg{\eqXomh{1}}{\om}=\drg{\eqXomh{2}}{\om}\text{ a.s.}$$
Now the risk-antitonicity of $\rG$ yields $\rG\left(\eqXh{1}\right)=\rG\left(\eqXh{2}\right),$ so $\rGn$ in \eqref{mm2} is indeed well-defined.
	
Next we will show that $\rGn$ is a CBRM. For this purpose, let in the following $\eqF, \eqG \in\CX$ and $\eqX, \eqY \in\Linf{\CF}{d}$ be such that $\cLg{\eqX}=\eqF$, $\cLg{\eqY}=\eqG$. 

\noindent\emph{Antitonicity}: 
 Assume $\eqF\geq\eqG$. Then, by \eqref{mm1} for almost every $\om\in\WR$ 
	$$-\drg{\eqXom}{\om}=\dLg{X(\om)}{\om}=F(\om)\geq G(\om)=\dLg{Y(\om)}{\om}=-\drg{\eqYom}{\om}.$$
Hence, risk-antitonicity ensures that $\rG(\eqX)\leq \rG(\eqY)$.
But by \eqref{mm2} this is equivalent to $\rGn(\eqF)\leq\rGn(\eqG).$
	
\noindent\emph{Constancy on $\dLG$}:	Constancy on $\dLG$ is an immediate consequence of \eqref{mm1}-\eqref{mm3}, since for $x\in\R^d$
	$$\rGn\left(\cLg{\eqx}\right)=\rG(\eqx)=-\cLg{\eqx}.$$
\smallskip\noindent 	
Hence, the decomposition \eqref{decomp} is proved.\\
\noindent\emph{Uniqueness}: Let $\rGn^{(1)},\rGn^{(2)}$ be CBRM's and $\dLG^{(1)},\dLG^{(2)}$ be CAF's such that $\rGn^{(1)}$ and $\rGn^{(2)}$ are constant on $\dLG^{(1)}$ and $\dLG^{(2)}$ resp.\ and it holds that
$$\rGn^{(1)}\left(\cLG^{(1)}(\eqX)\right)=\rG(\eqX)=\rGn^{(2)}\left(\cLG^{(2)}(\eqX)\right)\quad\text{for all }\eqX\in\Linf{\CF}{d}.$$
Then it follows from the constancy on the respective CAF's that for all $x\in\R^d$ $\cLG^{(1)}(\eqx)=\cLG^{(2)}(\eqx)$, i.e.
\begin{equation}\label{eq:eqonreals}
\dLG^{(1)}(x,\om)=\dLG^{(2)}(x,\om)\quad a.s.
\end{equation}
Note that by a similar argumentation as in the proof of \thref{LgAC} \eqref{eq:eqonreals} holds true on a universal $\PW$-nullset $N$ for all $x\in\R^d$. In order to show that $\cLG^{(1)}$ and $\cLG^{(2)}$ are not only equal on constants let $\eqX\in\Linf{\CF}{d}$. Then $\eqX$ can be approximated by simple $\CF$-measurable random vectors, i.e.\ there exists a sequence $(\eqX_n)_{n\in\N}$ with $X_n\to X$ $\PW$-a.s.\ and $X_n=\sum_{i=1}^{k_n}x_i^n\ind_{A_i^n}$ for all $n\in\N$, where $x_i^n\in\R$ and $A_i^n\in\CF,i=1,...,k_n$ are disjoint sets such that $\PW(A_i^n)>0$ and $\PW\big(\bigcup_{i=1}^{k_n}A_i^n\big)=1$. Denote by $M$ the $\PW$-nullset on which $(X_n)_{n\in\N}$ does not converge. Then by the continuity property of a CAF and \eqref{eq:eqonreals} we have for all $\om\in(N\cup M)^C$ that
\begin{align*}
\dLG^{(1)}\left(X(\om),\om\right) &=\dLG^{(1)}\left(\lim_{n\to\infty}X_n(\om),\om\right)\quad 
	=\quad \lim_{n\to\infty}\dLG^{(1)}\left(X_n(\om),\om\right)\\ &=  \lim_{n\to\infty}\sum_{i=1}^{k_n}\dLG^{(1)}\left(x_i^n,\om\right)\ind_{A_i^n}(\om)\quad 
	=\quad \lim_{n\to\infty}\sum_{i=1}^{k_n}\dLG^{(2)}\left(x_i^n,\om\right)\ind_{A_i^n}(\om)\\
	&=\dLG^{(2)}\left(X(\om),\om\right),
\end{align*}
and thus $\cLG^{(1)}(\eqX)=\cLG^{(2)}(\eqX)$ for all $\eqX\in\Linf{\CF}{d}$.
Finally for all $\eqF\in\Imag\cLG^{(1)}=\Imag\cLG^{(2)}$ there is an $\eqX\in\Linf{\CF}{d}$ such that $\cLG^{(1)}(\eqX)=\cLG^{(2)}(\eqX)=\eqF$ and hence
$$\rGn^{(1)}(\eqF)=\rG(\eqX)=\rGn^{(2)}(\eqF).$$
Next we consider the cases when $\rG$ fulfills some additional properties. \\
\emph{Constant on constants}: Let $\rG$ be risk-regular. Then \eqref{mm1} implies that for  all $\eqX\in\Linf{\CG}{d}$
$$\crg{\eqX}{\om}=\drg{\eqXom}{\om}=-\dLg{X(\om)}{\om}\text{ a.s.}$$
and hence $\rG(\eqX)=-\cLg{\eqX}$.
Let now $\eqF\in \CX\cap\Linf{\CG}{}$. By the definition of $\CX$ and \thref{l1} we know that there exists a $\eqX\in\Linf{\CG}{d}$ such that $\cLG\left(\eqX\right)=\eqF$.
We thus obtain by \eqref{mm2} that 
$$\rGn(\eqF)=\rG(\eqX)=-\cLg{\eqX}=-\eqF.$$

\noindent\emph{Quasiconvexity/convexity}: Let $\rG$ be risk-quasiconvex. 
Let $\eqalpha\in\Linf{\CG}{}$ with $0\leq\alpha\leq1$ and set $\eqH:=\eqalpha \eqF+(\eqone-\eqalpha)\eqG,$
where $\eqF,\eqG\in\CX$, and $\eqX, \eqY\in\Linf{\CF}{d}$ are such that $\cLg{\eqX}=\eqF$, $\cLg{\eqY}=\eqG$. Note that since $\CX$ is $\CG$-conditionally convex, $\eqH\in\CX$ and thus there exists a $\eqZ\in\Linf{\CF}{d}$ with $\cLg{\eqZ}=\eqH$.  Then 
\begin{align*}
\dLg{Z(\om)}{\om}=H(\om)&=\alpha(\om) F(\om)+(1-\alpha(\om))G(\om)\\
&=\alpha(\om) \dLg{X(\om)}{\om}+(1-\alpha(\om))\dLg{Y(\om)}{\om}\text{ a.s.}
\end{align*}
Thus it follows by \eqref{mm1} 
$$\drg{\eqZom}{\om}=\alpha(\om) \drg{\eqXom}{\om}+(1-\alpha(\om))\drg{\eqYom}{\om}\text{ a.s.},$$
which in conjunction with the risk-quasiconvexity of $\rG$ results in
$$\rGn(\eqH)=\rG(\eqZ)\leq\rG(\eqX)\vee\rG(\eqY)=\rGn(\eqF)\vee\rGn(\eqG).$$

Similarly one shows that $\rGn$ is convex if $\rG$ is risk-convex.

\noindent\emph{Positive homogeneity}: Let $\rG$ be risk-positively homogeneous. Further let $\eqF\in\CX$, $\eqX\in \Linf{\CF}{d}$ with $\cLg{\eqX}=\eqF$, and let $\eqalpha\in\Linf{\CG}{}$ with $\alpha\geq0$ and $\eqalpha \eqF=:\eqG\in\CX$. Then there is also a $\eqY\in \Linf{\CF}{d}$ with $\cLg{\eqY}=\eqG$. Moreover, $\dLg{Y(\om)}{\om}=\alpha(\om)\dLg{X(\om)}{\om}$ a.s. Hence, by \eqref{mm1} in conjunction with the risk-positive homogeneity we obtain that $\rG(\eqY)=\eqalpha\rG(\eqX)$. Consequently,
$$\rGn(\eqalpha \eqF)=\rG(\eqY)=\eqalpha \rG(\eq{X})=\eqalpha\rGn(\eqF).$$
\end{proof}
\begin{proof}[Proof of \thref{T2}]
As $\rG$ is risk-antitone and antitone on constants, it is obvious that $\rG$ also fulfills \eqref{hannes:PC}. 
Furthermore, we already showed, based on the antitonicity on constants and continuous paths requirements, in the proof of \thref{T1} that $\omega\mapsto \drg{x}{\om} (=-\widehat \Lambda_\CG(x,\omega))$ has almost surely antitone paths. 
Hence, we have for all $\eqX,\eqY\in\Linf{\CF}{d}$ with $\eqX\geq \eqY$, that $\drg{\eqXom}{\om}\leq\drg{\eqYom}{\om}\text{ a.s.}$ and thus the risk-antitonicity yields $\rG(\eqX)\leq\rG(\eqY)$.
Hence we conclude that $\rG$ is antitone.\\
For the converse implication let $\rG$ be antitone and let $\crg{\cdot}{\cdot}$ be a realization with corresponding restriction $\drg{\cdot}{\cdot}$ which fulfills \eqref{hannes:PC}. 
The antitonicity on constants is an immediate consequence of the much stronger antitonicity on $L^\infty$ of $\rG$. 
By reconsidering the proof of \thref{T1}, we observe that we may replace the risk-antitonicity by \eqref{hannes:PC} when extracting the aggregation function. 
Hence, \eqref{hannes:PC} is sufficient to construct a modification of $\crg{\cdot}{\cdot}$ and thus of $\drg{\cdot}{\cdot}$ such that $\drg{\cdot}{\cdot}$ has surely continuous and antitone paths. 
Therefore, suppose that $\crg{\cdot}{\cdot}$ is already this realization. 
Now let $\eqX,\eqY\in\Linf{\CF}{d}$ with $\drg{\eqXom}{\om}\leq\drg{\eqYom}{\om}\text{ a.s.}$
According to \thref{l1} with $\dLG$ as in \eqref{mm1} there are $\eqF,\eqG\in\Linf{\CF}{}$ such that
$$\drg{\eqFom\vecone}{\om}=\drg{\eqXom}{\om}\leq\drg{\eqYom}{\om}=\drg{\eqGom\vecone}{\om}\text{ a.s.}$$
As the paths of $\drg{\cdot}{\cdot}$ are antitone, it can be readily seen that $\eqF\geq\eqG$ on $A:=\left\{\om\in\WR:\drg{\eqXom}{\om}<\drg{\eqYom}{\om}\right\}$.
Now set $\eqH:=\eqG\eqind_A+\eqF\eqind_{A^C}\in\Linf{\CF}{}$. Then $\eqF\geq\eqH$ and $\drg{\eqYom}{\om}=\drg{\eqHom\vecone}{\om}$ a.s.
Hence it follows from \eqref{hannes:PC} and the antitonicity of $\rG$ that
$$\rG(\eqX)=\rG(\eqF\eqvecone)\leq\rG(\eqH\eqvecone)=\rG(\eqY).$$
This completes the proof of the first equivalence in \thref{T2}.

Let $\rG$ be risk-positive homogeneous and positive homogeneous on constants.
Since all requirements of \thref{T1} are met, we also have that $\R^d\ni x\mapsto\drg{x}{\om}$ is almost surely positive homogeneous.
Therefore we obtain for all $\eqX\in\Linf{\CF}{d}$ and $\eqalpha\in\Linf{\CG}{}$ with $\alpha\geq0$ that 
$$\drg{\alpha(\om) \eqXom}{\om}=\alpha(\om)\drg{\eqXom}{\om}\text{ a.s.},$$
and hence the risk-positive homogeneity implies $\rG(\eqalpha \eqX)=\eqalpha\rG(\eqX)$ which is positive homogeneity of $\rG$.\\
Conversely, if $\rG$ is positive homogeneous, then it is also positive homogeneous on constants as well as for almost all paths of the realization.
Hence, if there exists $\eqX,\eqZ\in\Linf{\CF}{d}$ and $\eqalpha\in\Linf{\CG}{}$ with $\alpha\geq0$ such that 
$$\drg{\eqZom}{\om}=\alpha(\om)\drg{\eqXom}{\om}\text{ a.s.},$$
then the right-hand-side equals $\drg{\alpha(\om) \eqXom}{\om}$ a.s.
Using \eqref{hannes:PC} and the positive homogeneity of $\rG$ we conclude that
$$\rG(\eqZ)=\rG(\eqalpha \eqX)=\eqalpha\rG(\eqX).$$

Let $\rG$ be risk-convex and convex on constants. First we will show that risk-convexity is equivalent to the following property:  
If for $\eqX,\eqY,\eqZ\in\Linf{\CF}{d}$ there exists a $\eqalpha\in\Linf{\CG}{}$ with $0\leq\alpha\leq 1$ such that
\begin{align}
	&\label{eq:T2:1}\drg{\eqZom}{\om}\leq\alpha(\om)\drg{\eqXom}{\om}+\big(1-\alpha(\om)\big)\drg{\eqYom}{\om}\text{ a.s.},\\
	& \text{then}\;   \rG(\eqZ)\leq\eqalpha\rG(\eqX)+\big(\eqone-\eqalpha\big)\rG(\eqY). \nonumber 
\end{align}
On the one hand, it is obvious that \eqref{eq:T2:1} implies risk-convexity.
On the other hand, let $\eqZh{1}\in\Linf{\CF}{d}$ such that 
\begin{equation*}\label{eq:T2:2}
\drg{\eqZhom{1}}{\om}\leq\alpha(\om)\drg{\eqXom}{\om}+\big(1-\alpha(\om)\big)\drg{\eqYom}{\om}\text{ a.s.}
\end{equation*}
We know by \thref{l1} that there is a $\eqZh{2}\in\Linf{\CF}{d}$ such that 
$$\alpha(\om)\drg{\eqXom}{\om}+\big(1-\alpha(\om)\big)\drg{\eqYom}{\om}=\drg{\eqZhom{2}}{\om}\text{ a.s.}$$
By the risk-convexity we obtain that 
$$\rG(\eqZh{2})\leq\eqalpha\rG(\eqX)+(\eqone-\eqalpha)\rG(\eqY).$$
As risk-antitonicity implies $\rG(\eqZh{1})\leq\rG(\eqZh{2})$, we 
conclude that risk-convexity and \eqref{eq:T2:1} are equivalent. 
Next we show the convexity of $\rG$. 
To this end let $\eqX,\eqY\in\Linf{\CF}{d}$ and $\eqalpha\in\Linf{\CG}{}$ with $0\leq\alpha\leq 1$.
Once again we can reason as in the proof of \thref{T1} that $\R^d\ni x\mapsto\drg{\eqx}{\om}$ is almost surely convex, because $\rG$ has continuous paths and is convex on constants.
Thus we have that
$$\drg{(\alpha X+(1-\alpha)Y)(\om)}{\om}\leq\alpha(\om)\drg{\eqXom}{\om}+\big(1-\alpha(\om)\big)\drg{\eqYom}{\om}\text{ a.s.}$$
Now \eqref{eq:T2:1} implies that
$$\rG(\eqalpha \eqX+(\eqone-\eqalpha)\eqY)\leq\eqalpha\rG(\eqX)+(\eqone-\eqalpha)\rG(\eqY),$$
which is the desired convexity of $\rG$.

The other assertion concerning risk-quasiconvexity follows in a similar way.
\end{proof}
\section{Examples}\label{SEC:EX}
\begin{Example}
	As already mentioned in the introduction a typical aggregation function when dealing with multidimensional risks is $$\dL_\text{sum}(x)=\sum_{i=1}^d x_i,~x\in\R^d.$$
	However, such an aggregation rule might not always be reasonable when measuring systemic risk. The main reason for this is the limited transferability of profits and losses between institutions of a financial system. An alternative popular aggregation function which does not allow for a subsidization of losses by other profitable institutions is given by
	$$\dL_{\text{loss}}(x)=\sum_{i=1}^d-x^-_i,~x\in\R^d,$$
	where $x^-_i=-\min\{x_i,0\}$; see \thref{ex:DIP}. Obviously, both $\dL_\text{sum}$ and $\dL_\text{loss}$ are DAF's which are additionally concave and positive homogeneous.
\end{Example}
\begin{Example}[Countercyclical regulation]\th\label{ex:Countercyclical}
Risk charges based on systemic risk measures typically will increase drastically in a distressed market situation which might even worsen the crisis further. Therefore one might argue that, for instance in a recession where also the real economy is affected, the financial regulation should be relaxed in order to stabilize the real economy, cf.\ \cite{Brunnermeier2013}.
In our setup we can incorporate such a dynamic countercyclical regulation as follows:

Let $(\WR,\CF,(\CF_t)_{t\in\{0,...,T\}},\PW)$ be a filtered probability space, where $\CF_T=\CF$. Let $(x,y)\in\R^{2d}$ be the profits/losses of the financial system, where the first $d$ components $x$ are the profits/losses from contractual obligations with the real economy and $y$ are the profits/losses from other obligations. Moreover let $Y(t)$, $t=-1,0,...,T-1$ be the gross domestic product (GDP) process with $Y(t)\in\SLinf{\CF_t}{}$, $t=0,...,T-1$, and $Y(-1)\in \R^+\backslash\{0\}$.
Suppose that the regulator sees the economy in distress at time $t$, if the GDP process $Y(t)$ is less than $(1+\theta)Y(t-1)$ for some $\theta\in\R$. We assume that in those scenarios the regulator is interested to lower the regulation in order to give incentives to the financial system for the supply of additional credit to the real economy.
This policy might lead to the following dynamic conditional aggregation function from the perspective of the regulator
$$\dL\big((x,y),t,\om):=-\sum_{i=1}^d\left(\alpha\ind_{A(t)}(\om)+\ind_{A(t)^C}(\om)\right) x_i^-+y_i^-, t=0,...,T-1,$$
where $\alpha\in[0,1)$ and $A(t)=\left\{Y(t)\leq (1+\theta)Y(t-1)\right\}$ for $t=0,...,T-1$. Obviously, $\dL\big((x,y),t,\om)$ is a CAF with respect to $\CF_t$ which is positive homogeneous and concave.
\end{Example}

\begin{Example}[Too big to fail]\th\label{tbtf}
In this example we will consider a dynamic conditional aggregation function which depends on the relative size of the interbank liabilities. For instance, \cite{Cont2013} find that for the Brasilian banking network there is a strong connection between the size of the interbank liabilities of a financial institution and its systemic importance. This fact is often quoted as 'too big to fail'. 

Let $(\WR,\CF,(\CF_t)_{t\in\{0,...,T\}},\PW)$ be a filtered probability space, where $\CF_T=\CF$. Moreover, let $L_i(t)\in\SLinf{\CF_t}{}$ denote the sum of all liabilities at time $t$ of institution $i\in \{1,\ldots, d\}$ to any other banks. Then 
$$\alpha_i(t):=\frac{L_i(t)}{\sum_{j=1}^d L_j(t)},\quad t=0,...,T-1,$$
is the relative size of its interbank liabilities.
Now consider the following conditional extension of an aggregation function which was proposed in \cite{Brunnermeier2013}:
$$\dL_{\text{BC}}(x,t,\om)=\sum_{i=1}^d -\alpha_i(t,\om) x_i^-+\beta_i(\theta_i-x_i)^-,\quad t=0,...,T-1,$$
where $\beta,\theta\geq0$. Firstly, this conditional aggregation function always takes losses into consideration, whereas profits of a financial institution $i$ are only accounted for if they are above a firm specific threshold $\theta_i$. Secondly, profits are weighted by the deterministic factor $\beta$ and the losses are weighted proportional to the liability size of the corresponding financial institution at time $t$. Therefore losses from large institutions, which are more likely to be systemically relevant, contribute more to the total risk. \\
$\dL_{\text{BC}}(\cdot,t,\cdot)$ is a CAF which, however, in general is neither quasiconvex nor positively homogeneous as it may be partly flat depending on $\theta$. 
\end{Example}

\begin{Example}\th\label{ex:preference}
Suppose that the regulator of the financial system has certain preferences on the distribution of the total loss amongst the financial institutions. For instance he might prefer a situation when a number of financial institutions face a relatively small loss each in front of a situation in which one financial institution experiences a relatively large loss. 
Such a preference can be incorporated by the following aggregation function 
$$\dL_{\text{exp}}(x)=\sum_{i=1}^d -x^-_i\ind_{\{x_i>\theta_i\}}+\left(\frac{1}{\gamma_i}\left(1-e^{\gamma_i \left(x_i^-+\theta_i\right)}\right)+\theta_i\right)\ind_{\{x_i\leq \theta_i\}},$$
where $\theta_i\leq0$ and $\gamma_i>0$ for $i=1,...,d$.
That is, if the losses of firm $i$ exceed a certain threshold $\theta_i$, e.g.\ a certain percentage of the equity value, then the losses are accounted for exponentially.
\end{Example}

\begin{Example}[Stochastic discount]\th\label{EX:Stochasticdiscount}
Suppose that $D\in\SLinf{\CF}{}$ is some $\CG$-measurable stochastic discount factor. 
A typical approach to define monetary risk measurement of some future risk is to consider the discounted risks. Consider any (conditional) aggregation function $\dL$, which does not discount in aggregation, such as $\dL_{\text{sum}}$, $\dL_{\text{loss}}$, or $\dL_{\text{BC}}$, etc. Then the discounted monetary aggregated risk is $D\dL(X)$.  If $\dL$ is positively homogeneous, then $D\dL(X)=\dL(DX)$ which is the aggregated risk of the discounted system $DX$. However, if $\dL$ is not positively homogeneous - such as $\dL_{\text{BC}}$ or $\dL_{\text{exp}}$ - then the discounted aggregated risk can only be formulated in terms of the conditional aggregation function $$\dLg{x}{\om}:=\dL(x)D(\om).$$
\end{Example}

\begin{Example}[CoVaR]\th\label{ex:COVAR}
In this example we will consider the CoVaR proposed in \cite{Adrian2011}; see \eqref{eq:covar:intro}. To this end, we first recall the (conditional) Value at Risk: We denote the Value at Risk at level $q\in(0,1)$ by
$$\VaR_q(\eqF)=-\inf_{x\in\R}\{\PW(F\leq x)>q\}.$$
Furthermore, the conditional VaR at level $q\in(0,1)$ is defined as
$$\VaR_q(\eqF|\CG):=-\essinf_{\eqalpha\in\Linf{\CG}{}}\left\{\PW\big(F\leq\alpha\;\big|\;\CG\big)> q\right\},$$
c.f. \cite{Follmer2011}.
The conditional VaR is positive homogeneous, antitone, and constant on constants. Thus it is a CBRM which is constant on every possible CAF.
Note that, as is well-known for the unconditional case, the conditional VaR is not quasiconvex. By composing $\VaR_q(\cdot|\CG)$ with a CAF $\dLG$ we obtain a CSRM 
\begin{equation}\label{eq:gen:covar}
\rG(\eqX)=\VaR_q\left(\left.\cLG(\eqX)\right|\CG\right), \quad \eqX\in \Linf{\CF}{d},
\end{equation}
which is risk-positive homogeneous and risk-regular. 
\\
Now we consider the case where $\eqX$ represents a financial system and the CAF in \eqref{eq:gen:covar} is $\dL_{\text{sum}}$. Moreover consider the sub-$\sigma$-algebra $\CG:=\sigma(A)$ of $\CF$, where $A:=\{X_j\leq-\VaR_q(X_j)\}$ for a fixed $j\in\{1,...,d\}$.
Then the CSRM $\rG(\eqX)$ from \eqref{eq:gen:covar} evaluated in the event $A$ equals 
\begin{equation}\label{eq:covar:ex}\VaR_q\left(\left.\sum_{i=1}^d\eqX_i\right|\left\{X_j\leq-\VaR_q(X_j)\right\}\right)\end{equation}
which is the CoVaR proposed in \cite{Adrian2011}.\\
As we have already pointed out in the introduction, it is more reasonable to use an aggregation function which incorporates an explicit contagion structure. We will modify the CoVaR in this direction in \thref{EX:CM}.
\end{Example}

\begin{Example}[CoES and SES]\th\label{ex:COES}
The conditional Average Value at Risk at level $q\in(0,1)$ is given by  
$$\AVaR_q(\eqF|\CG):=\esssup_{\PWQ\in\CP^q}\BEW{\PWQ}{-\eqF}{\CG},\quad \eqF\in \Linf{\CF}{},$$
where $\CP^q$ is the set of probability measures $\PWQ$ on $(\WR,\CF)$ which are absolutely continuous w.r.t.\ $\PW$ such that $\PWQ|_\CG = \PW$ and $\frac{d\PWQ}{d\PW}\leq 1/q$ a.s. $\AVaR_q(\cdot|\CG)$ is a convex and positive homogeneous CBRM. 
Notice that the conditional Average Value at Risk can also be written as
\begin{equation}\label{eq:AVAR}\AVaR_q(\eqF|\CG)=\frac{1}{q}\BEW{\PW}{\left(\eqF+\VaR_q(\eqF|\CG)\right)^-}{\CG}+\VaR_q(\eqF|\CG),\end{equation}
cf. \cite{Follmer2011}, where $\VaR_q(\cdot|\CG)$ is discussed in \thref{ex:COVAR}.

\smallskip\noindent
As in \thref{ex:COVAR} let $\CG=\sigma\left(A\right)$ with $A=\{X_j\leq-\VaR_q(X_j)\}$ for a fixed $j\in\{1,...,d\}$ and $q\in(0,1)$.
Using \eqref{eq:AVAR}, if $$\PW(\left.F\leq -\VaR_q(\eqF|\CG)\right|\CG)=\eq{q},$$ then
\begin{align}
\AVaR_q(\eqF|\CG)	&=\BEW{\PW}{-\eqF}{\{F\leq-\VaR_q(\eqF|A)\}\cap A}\ind_A \nonumber \\
&\hspace{0.5cm}+ \BEW{\PW}{-\eqF}{\{F\leq-\VaR_q(\eqF|A^C)\}\cap A^C}\ind_{A^C}.
\label{eq:avar:ex} \end{align}
Therefore, $\rG(\eqX)=\AVaR_q(\dL_{\text{sum}}(\eqX)|\CG)$ evaluated in the event $A$ equals
$$\BEW{\PW}{-\sum_{i=1}^d \eqX_i}{\left\{\sum_{i=1}^d X_i\leq -\VaR_q\left(\sum_{i=1}^d \eqX_i~\bigg|~A\right)\right\}\cap A}.$$
In other words, $\rG(\eqX)|_A$ is the expected loss of the financial system $\eqX$ given that the loss $\eqX_j$ of institution $j$ is below $\text{VaR}_q(\eqX_j)$ and simultaneously the loss of the system is below its CoVaR $\VaR_q(\sum_{i=1}^d \eqX_i|A)$. \eqref{eq:avar:ex} corresponds to the conditional expected shortfall (CoES) proposed in \cite{Adrian2011}.

\smallskip\noindent
Now we change the point of view and consider the losses of a financial institution $\eqX_j$ given that the financial system is in distress, that is if 
$$\sum_{i=1}^d\eqX_i\leq -\VaR_q\left(\sum_{i=1}^d\eqX_i\right).$$ 
Let $\CG:=\sigma\big(\{\sum_{i=1}^dX_i\leq -\VaR_q(\sum_{i=1}^d\eqX_i)\}\big)$. By composing the DAF $\dL(x):=x_j$ and the CBRM $\rGn(\eqF)=\BEW{\PW}{-\eqF}{\CG}$ we obtain a convex and positive homogeneous CSRM
$$\rho_\CG(\eqY)=\BEW{\PW}{\eqY_j}{\CG},\quad \eqY\in\Linf{\CF}{d}.$$
$\rho_\CG(\eqX)$ evaluated on the event $\{\sum_{i=1}^dX_i\leq -\VaR_q(\sum_{i=1}^d\eqX_i)\}$ which is the so-called systemic expected shortfall ($SES^{j}$) introduced in \cite{Acharya2010}. 
\end{Example}

\begin{Example}[DIP]\th\label{ex:DIP} 
In this example we recall the distress insurance premium (DIP) proposed by \cite{Huang2012}. 
It is closely related to CoES and SES discussed in \thref{ex:COES}. 
However, instead of $\dL_{\text{sum}}$, the aggregation function is $\dL_{\text{loss}}$, that is losses cannot be subsidized by profits from the other institutions. 
The event representing the financial system in distress is $\{\cL_{\text{loss}}(X)\leq \theta\}$ for a fixed $\theta\in\R$, i.e.\ the financial system is in distress if the total losses fall below a certain threshold $\theta$. 
Let $\CG:=\sigma\left(\left\{\cL_{\text{loss}}(X)\leq \theta\right\}\right)$. 
As a CBRM choose $\rGn(\eqF)=\BEW{\PWQ}{-\eqF}{\CG}$, where $\PWQ$ is a risk neutral measure which is equivalent to $\PW$.
The resulting positive homogeneous and convex CSRM evaluated in $\left\{\cL_{\text{loss}}(X)\leq \theta\right\}$ is given by
$$\BEW{\PWQ}{\sum_{i=1}^d \eqY_i^-}{\cL_{\text{loss}}(X)\leq \theta},\quad \eqY\in\Linf{\CF}{d},$$
which corresponds to the DIP for $\eqY=\eqX$. Since the expectation is under a risk neutral measure it can be interpreted as the premium of an aggregate excess loss reinsurance contract. 
\end{Example}

\begin{Example}[Contagion model]\th\label{EX:CM}
In this example we want to specify an aggregation function that explicitly models the default mechanisms in a financial system and perform a small simulation study.
For this purpose we will assume the simplified balance sheet structure given in Table \ref{tbl:balancesheet} for each of the $d$ financial institutions. Let $\eqX\in\Linf{\CF}{d}$ be the vector of equity values of the financial institutions after some market shock on the external assets/liabilities. 
Moreover let $\BPi$ be the relative liability matrix of size $d\times d$, i.e.\ the $i,j$th entry represents the proportion of the total interbank liabilities of institution $i$ which it owes to institution $j$. We denote the $d$-dimensional vector of the total interbank liabilities by $L$.
\begin{table}[htb!]
	\begin{center}
		\begin{tabular}{|c|c|}
		\hline
		Assets&Liabilities\\\hline\hline
		\multirow{2}{*}{External Assets}&Equity\\\cline{2-2}
		&\multirow{2}{*}{External Liabilities}\\\cline{1-1}
		\multirow{2}{*}{Interbank Assets}&\\\cline{2-2}
		&Interbank Liabilities\\\hline
		\end{tabular}
		\caption{Stylized balance sheet.}
		\label{tbl:balancesheet}
	\end{center}
\end{table}

We now consider an extension of the aggregation function proposed by \cite{Chen2013} which is based on the default model in \cite{Eisenberg2001}:\\
For a deterministic vector of equity values $x\in\R^d$ we define the DAF $\dL_{\text{CM1}}$ by the optimization problem:
\begin{align}
\dL_{\text{CM1}}(x):=\max_{y,b\in\R^d_+} \quad&\sum_{i=1}^d-\left(x_i+b_i-(\BPi^\top y)_i\right)^- -\gamma b_i\label{eq:ENobj}\\
\text{subject to}\quad&y=\max\left(\min\left(\BPi^\top y-x-b,L\right),0\right),\label{eq:ENconstraint}
\end{align}
where $y_i$ is the amount by which financial institution $i$ decreases its total liabilities to the remaining institutions and $b\in\R^d$ represents the option of an external participant, e.g.\ a lender of last resort, to inject a capital amount $b_i$ into institution $i$. 
The cost of the injected capital of the lender of last resort is modeled by the parameter $\gamma>1$.\\
There are two possible ways a financial institution can default: First it might default due to the market shock right at the beginning ($x_i<0$). Secondly, if it still has sufficient capital endowment after the market shock, the losses from other institutions might force it into default by contagion effects ($x_i-(\BPi^\top y)_i<0$).
The constraint \eqref{eq:ENconstraint} expresses that if a financial institution defaults, it can either reduce its payments to other institutions or the lender of last resort has to inject capital to cover the default losses. As opposed to the framework in \cite{Chen2013} we are able to incorporate the limited liability assumption ($y\leq L$) proposed in \cite{Eisenberg2001}. Furthermore the lender of last resort will only inject capital into a financial institution as long as the benefit from preventing further contagion exceeds the costs of the injection of the lender of last resort.\\
It can be readily seen that $\dL_{\text{CM1}}$ is isotone and continuous.
The aggregation function $\dL_{\text{CM1}}$ given in \eqref{eq:ENobj} is deterministic. One possible extension within our framework is now to consider conditional modifications of $\dL_{\text{CM1}}$. For example, if there exists only partial information or uncertainty about the future of the interbank liability structure then the relative liability matrix $\BPi(\omega)$ and/or the total interbank liabilities $L(\omega)$ might be modeled stochastically. In this case it can be easily seen that the corresponding aggregation function is a CAF.


We will complete this example by employing the aggregation function $\dL_{\text{CM1}}$ in a small simulation study. The simulation serves illustration purposes only and does not have the objective to represent a real world financial system.
We begin with the construction of network with $10$ institutions as a realization of an Erd\"{o}s-R\'enyi graph with success probability $p=0.35$, that is there exists a directed edge between institution $i$ and $j$ with a probability of $35\%$ independent of the other connections. Furthermore we assume that the exposures between financial institutions follow a half-normal distribution. 
So far we have only knowledge about the size of the interbank assets/liabilities. For the remaining parts of the balance sheet (see Table \ref{tbl:balancesheet}) we assume that the value of equity is a fixed proportion of the total assets and that the external assets/liabilities are chosen such that the balance sheet balances out.
The resulting financial system can be found in Figure \ref{fig:financialsystem}.\\
In the following we want to investigate the impact on the financial system if the institutions are exposed to a shock on their external, i.e. non-interbank, assets and liabilities. For this purpose we add a shock to the initial equity which is normally distributed with mean zero and a standard deviation which is proportional to the financial institutions external assets/liabilities. The single shocks are positively correlated with $\rho=0.1$. \\
In Table \ref{tbl:EN} we list some comparative statistics of the financial system for 30'000 shock scenarios and for different costs of the regulator. The first two rows consider the CSRM's obtained by composing the aggregation function $\dL_{\text{CM1}}$ with the negative expectation and the VaR at level 5\%, resp. Note that we also included the asymptotic case of $\gamma\to\infty$, which corresponds to the situation in which the regulator does not intervene.

\begin{figure}[htb!]
		\centering
		\includegraphics[width=14cm]{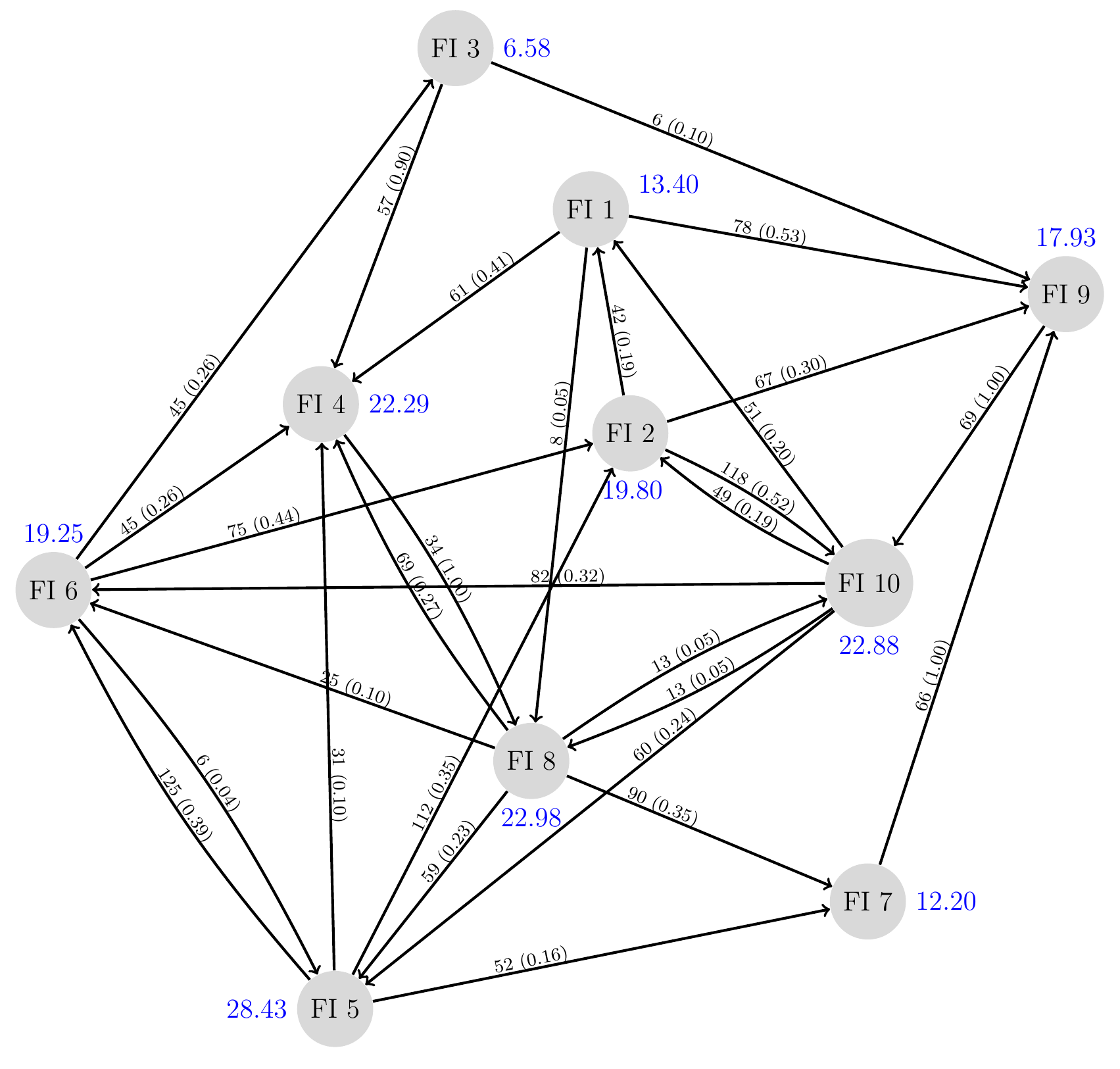}
		\caption{Exemplary financial system.}
		\label{fig:financialsystem}
\end{figure}

\begin{table}[htb!]
	\begin{center}
		\begin{tabular}{ |l|c|c|c| }
			\hline
			$\gamma $						&1.6			&2.6				&'$\infty$'		\\ \hline
			$-\EW{}{\cL_{\text{CM1}}(X)}$&70.62		&88.00&109.30\\ \hline
			$\VaR_{0.05}(	\cL_{\text{CM1}}(X))$			&213.34		&291.59&442.45\\ \hline
			$\sum b_i	$					&23.01		&10.67&0.00\\ \hline
			$\sum x_i^-$ &\multicolumn{3}{c|}{52.33}\\ \hline
			Initially defaulted banks &\multicolumn{3}{c|}{2.57}\\ \hline
			Defaulted banks after contagion&2.87&3.25&3.58\\ \hline
		\end{tabular}
		\caption{Statistics of the financial system for 30'000 shock scenarios.}
		\label{tbl:EN}
	\end{center}
\end{table}

\noindent We observe that with an increasing $\gamma$ the regulator is less willing to inject capital and thus the contagion effects increase which results in a higher risk in terms of the expectation and the Value at Risk. Moreover without a regulator on average round about one financial institution defaults due to contagion effects.

In the next step we want to investigate the systemic importance of the single institutions. For this purpose we modify the CoVaR in \thref{ex:COVAR}, that is, instead of the summing the losses we use the more realistic CAF $\dL_{\text{CM1}}$. Thus we define for a $q\in(0,1)$:
$$\text{CoVaR}_q^j:=\VaR_q\left(\left.\cL_{\text{CM2}}(\eqX)\right|X_j\leq-\VaR_q(\eqX_j)\right),\quad j=1,...,d,$$
where
\begin{align*}
		\dL_{\text{CM2}}(x):=\max_{y,b\in\R^d_+}				\quad&\sum_{i=1}^d-y_i -\gamma b_i\\
		\text{subject to}			\quad&y=\max\left(\min\left(\BPi^\top y-x-b,L\right),0\right).
\end{align*}
The difference between $\dL_{\text{CM2}}$ and $\dL_{\text{CM1}}$ is that losses in case of a default are only taken into consideration up to the total interbank liabilities of this institution, i.e.\ only the losses which spread into the system are taken into account. For example consider an isolated institution in the system which has a huge exposure to the outside of the system, then in order to identify systemically relevant institution it is not meaningful to aggregate the losses from those exposures, nevertheless from the perspective of the total risk of the system those losses should also contribute as it was done in our prior study. As for $\dL_{\text{CM1}}$ it can be easily seen that $\dL_{\text{CM2}}$ is a CAF.
The results for this risk-consistent systemic risk measures $\text{CoVaR}_q^j,j=1,...,d$ can be found in Table \ref{tbl:sysimportance}. 
\begin{table}[htb!]
	\begin{center}
	\resizebox{14.4cm}{!}{
		\renewcommand{\arraystretch}{1.4}
		\begin{tabular}{ |c|l *{10}{|c}| }
			\hline
			\parbox[t]{3mm}{\multirow{2}{*}{\rotatebox[origin=c]{90}{$\gamma=2.6$}}}&FI $j$&2& 3&  6&  4&  7&  1& 10& 9& 5&  8\\\cline{2-12}
			&$\text{CoVaR}_{0.1}^j$&266.94&  297.28&  298.49&  308.61&  320.58&  322.56&  332.94&  355.23&  362.27&  367.68\\\hline
			\parbox[t]{1.7mm}{\multirow{2}{*}{\rotatebox[origin=c]{90}{$\gamma=\infty$}}}&FI $j$&2& 4& 3& 7& 9& 6& 1&10& 8& 5\\\cline{2-12}
			&$\text{CoVaR}_{0.1}^j$&397.73&  419.11&  423.18&  459.33&  471.81&  473.61&  481.40&  548.21&  563.60&  601.09\\\hline
			\multicolumn{1}{c|}{}&FI $j$& 2& 6&10& 3& 1& 7& 5& 9& 8& 4\\\cline{2-12}
			\multicolumn{1}{c|}{}&-$\VaR_{0.1}(\eqX_j)$&	13.30&   -7.67&  -15.05&  -17.01&  -20.69&  -22.98&  -26.89&  -30.48&  -32.11&  -33.41\\\cline{2-12}
			\multicolumn{1}{c|}{}&FI $j$&  4&3&7&9&1&6&2&10&8&5\\\cline{2-12}
			\multicolumn{1}{c|}{}&$L_j$&	34&63&66&69&147&171&227&255&256&320\\\cline{2-12}
		\end{tabular}
		\renewcommand{\arraystretch}{1}
		}
		\caption{Systemic importance ranking based on $\text{CoVaR}_{0.1}^j$.}
		\label{tbl:sysimportance}
	\end{center}
\end{table}
We observe that the systemic importance is always a trade-off between the possibility of high downward shocks and the ability to transmit them. For instance institution 2 can transfer losses up to 227, but it is also the institution which is the least exposed to the market, which makes it also the least systemic important institution. Contrarily institution 4 is the most exposed institution, but does not have the ability to transmit those losses which also results in a low position in the systemic importance ranking. Finally institution 5 or 8 are very vulnerable to the market and have the largest total interbank liabilities and are thus identified as the most systemic institutions.

\end{Example}
\bibliographystyle{chicago}

\end{document}